\newtheorem{theorem}{Theorem}
\newtheorem{lemma}{Lemma}
\newtheorem{claim}{Claim}
\newtheorem{fact}{Fact}
\newenvironment{proof}{\noindent {\bf Proof.\,}}{\qed}
\def\MC{\textsc{Matching Cut}}
\def\d{\mathrm{dist}}
\newcommand{\Po}{\mathsf{P}}
\newcommand{\NP}{\mathsf{NP}}
\begin{document}

\begin{frontmatter}

\title{A complexity dichotomy for Matching Cut in (bipartite) graphs of fixed diameter\tnoteref{label1}}
\tnotetext[label1]{This paper is an extended version of the ISAAC 2016 paper~\cite{LeL16}.}

\author{Hoang-Oanh Le}
\ead{LeHoangOanh@web.de}
\address{Berlin, Germany}

\author{Van Bang Le}
\ead{van-bang.le@uni-rostock.de}
\address{Universit\"at Rostock, Institut f\"ur Informatik, Rostock, Germany}

\begin{abstract}
In a graph, a matching cut is an edge cut that is a matching. \MC\ is the problem of deciding whether or not a given graph has a matching cut, which is known to be NP-complete even when restricted to bipartite graphs. It has been proved that \MC\ is polynomially solvable for graphs of diameter two. 
In this paper, we show that, for any fixed integer $d\ge 3$, \MC\ is $\NP$-complete in the class of graphs of diameter $d$. This resolves an open problem posed by Borowiecki and Jesse-J\'ozefczyk (2008)~\cite{BorowieckiJ08}. 

We then show that, for any fixed integer $d\ge 4$, \MC\ is $\NP$-complete even when restricted to the class of bipartite graphs of diameter $d$. Complementing the hardness results, we show that \MC\ is polynomial-time solvable in the class of bipartite graphs of diameter at most three, and point out a new and simple polynomial-time algorithm solving \MC\ in graphs of diameter~$2$. 
\end{abstract}

\begin{keyword}
Matching cut; NP-hardness; graph algorithm; computational complexity; dichotomy theorem; decomposable graph 
\end{keyword}

\end{frontmatter}

\section{Introduction}

In a graph $G=(V,E)$, a \emph{cut} is a partition $V=X\,\dot\cup\,Y$ of the vertex set into disjoint, non-empty sets $X$ and $Y$, written $(X,Y)$. The set of all edges in $G$ having an endvertex in $X$ and the other endvertex in $Y$, also written $(X,Y)$, is called the \emph{edge cut} of the cut $(X,Y)$. A \emph{matching cut} is an edge cut that is a (possibly empty) matching. Note that, by our definition, a matching whose removal disconnects the graph need not be a matching cut (but such a matching always contains some matching cut). Note also that a graph has an empty matching cut if and only if it is disconnected. 

Another way to define matching cuts is as follows (\cite{Graham,Chvatal84}). A partition $V=X\,\dot\cup\,Y$ of the vertex set of the graph $G=(V,E)$ into disjoint, non-empty sets $X$ and $Y$, is a matching cut if and only if each vertex in $X$ has at most one neighbor in $Y$ and each vertex in $Y$ has at most one neighbor in $X$.

Graham~\cite{Graham} studied matching cuts in graphs in connection to a number theory problem called cube-numbering. In~\cite{FarleyP82}, Farley and Proskurowski studied matching cuts in the context of network applications. Patrignani and Pizzonia~\cite{PatrignaniP01} pointed out an application of matching cuts in graph drawing. Matching cuts have been used by Ara\'ujo et al.~\cite{ACGH} in studying good edge-labellings in the context of WDM (Wavelength Division Multiplexing) networks.

Not every graph has a matching cut; the \MC\ problem is the problem of deciding whether or not a given graph has a matching cut:

\medskip\noindent
\fbox{
\begin{minipage}{.955\textwidth}
\MC\\[.5ex]
\begin{tabular}{l l}
{\em Instance:}& A graph $G=(V,E)$.\\
{\em Question:}& Does $G$ have a matching cut?\\
\end{tabular}
\end{minipage}
}

\medskip\noindent
This paper considers the computational complexity of the \MC\ problem in graphs of fixed diameter.  

\subsection{Previous results}  
Graphs admitting a matching cut were first discussed by Graham in \cite{Graham} under the name \emph{decomposable graphs}. The first complexity and algorithmic results for \MC\ have been obtained by Chv\'atal, who proved in \cite{Chvatal84} that \MC\ is $\NP$-complete, even when restricted to graphs of maximum degree four and polynomially solvable for graphs of maximum degree at most three. These results triggered a lot of research on the computational complexity of \MC\ in graphs with additional structural assumptions; see~\cite{Bonsma09,BorowieckiJ08,KratschL16,LeR03,Moshi89,PatrignaniP01}. In particular, the $\NP$-hardness of \MC\ has been further strengthened for planar graphs of maximum degree four (\cite{Bonsma09}) and bipartite graphs of maximum degree four (\cite{LeR03}). Moreover, it follows from Bonsma's result~\cite{Bonsma09} and a simple reduction observed by Moshi~\cite{Moshi89} that \MC\ remains NP-complete on bipartite planar graphs of maximum degree eight.  

On the positive side, among others, an important polynomially solvable case has been established by Borowiecki and Jesse-J\'ozefczyk, who proved in~\cite{BorowieckiJ08} in 2008 that \MC\ is polynomial-time solvable for graphs of diameter $2$. They also posed the problem of determining the largest integer $d$ such that \MC\ is solvable in polynomial time for graphs of diameter $d$. This open problem was the main motivation of the present paper. 

\subsection{Our contributions} We prove that \MC\ is $\NP$-complete, even when restricted to graphs of diameter $d$, for any fixed integer $d\ge 3$. Thus, unless $\NP =\Po$, \MC\ cannot be solved in polynomial time for graphs of diameter $d$, for any fixed $d\ge 3$. This resolves the open problem posed by Borowiecki and Jesse-J\'ozefczyk mentioned above. Actually, we show a little more: \MC\ is $\NP$-complete in graphs of diameter $3$ and remains $\NP$-complete in bipartite graphs of fixed diameter $d\ge 4$. An alternative proof (reduction from 1-IN-3 3SAT) for the case of graphs with diameter $d\ge 4$ and the case of bipartite graphs of diameter $d\ge 5$ is given in the conference paper~\cite{LeL16}. 
Complementing our hardness results, we show that \MC\ can be solved in polynomial time in bipartite graphs of diameter at most $3$. We also point out a new and simple approach solving \MC\ in diameter-2 graphs in polynomial time. In summary, our main results are the following complexity dichotomy theorems: 
\begin{itemize}
\item \MC\ is $\NP$-complete for graphs of fixed diameter $d\ge 3$ and (we provide an alternative proof that the problem is) polynomially solvable for graphs of diameter $d\le 2$.
\item \MC\ is $\NP$-complete for bipartite graphs of fixed diameter $d\ge 4$ and polynomially solvable for bipartite graphs of diameter $d\le 3$. 
\end{itemize}

\subsection{Notation and terminology}
Let $G=(V,E)$ be a graph with vertex set $V(G)=V$ and edge set $E(G)=E$. 
An \emph{independent set} (a \emph{clique}) in $G$ is a set of pairwise
non-adjacent (adjacent) vertices. A \emph{biclique} is a complete bipartite graph; we sometimes write $Q= (V_1,V_2)$ for a biclique with color classes $V_1$ and $V_2$.  
The neighborhood of a vertex $v$ in $G$, denoted by $N_G(v)$, is the set of all vertices in $G$ adjacent to $v$; if the context is clear, we simply write $N(v)$. 
Set $\deg(v) = |N(v)|$, the degree of the vertex $v$. 
For a subset $W\subseteq V$,
$G[W]$ is the subgraph of $G$ induced by $W$, and $G-W$ stands for $G[V\setminus W]$.  
The complete graph and the path on $n$ vertices is denoted by $K_n$ and $P_n$, respectively; $K_3$ is also called a \emph{triangle}. The complete bipartite graph with one color class of size $p$ and the other of size $q$ is denoted by $K_{p,q}$. Observe that, for any matching cut $(X,Y)$ of $G$, any $K_n$ with $n\ge 3$, and any $K_{p,q}$ with $p\ge 2, q\ge 3$, in $G$ is contained in $G[X]$ or else in $G[Y]$. 

Given a graph $G=(V,E)$ and a partition $V=X\,\dot\cup\, Y$, it can be decided in linear time if $(X,Y)$ is a matching cut of $G$. This is because $(X,Y)$ is a matching cut of $G$ if and only if the bipartite subgraph $B_G(X,Y)$ of $G$ with the color classes $X$ and $Y$ and edge set $(X,Y)$ is $P_3$-free. That is, $(X,Y)$ is a matching cut of $G$ if and only if the non-trivial connected components of the bipartite graph $B_G(X,Y)$ are edges. A path $P_3$ in $B_G(X,Y)$, if any, is called a \emph{bad~$P_3$}.

A \emph{bridge} in a graph is an edge whose deletion increases the number of the connected components. Since disconnected graphs and graphs having a bridge have a matching cut, we may assume that all graphs considered are connected and $2$-edge connected. The \emph{distance} between two vertices $u,v$ in a (connected) graph $G$, denoted $\d_G(u,v)$, is the length of a shortest path connecting $u$ and $v$. 
The \emph{diameter} of $G$, denoted $\mathrm{diam}(G)$, is the maximum distance between all pairs of vertices in $G$.

\smallskip
The paper is organized as follows. In Section~\ref{sec:diam3} we show that \MC\ is $\NP$-complete when restricted to graphs of diameter $d\ge 3$, for any fixed integer $d\ge 3$. In Section~\ref{sec:diam4bip} we show that \MC\ remains $\NP$-complete even when restricted to bipartite graphs of diameter $d$, for any fixed integer $d\ge 4$. In section~\ref{sec:diam3bip} we point out a new and simple polynomial time algorithm solving \MC\ in diameter-$2$ graphs, and show that \MC\ can be solved in polynomial time for bipartite graphs of diameter at most $3$. We conclude the paper with Section~\ref{sec:con}.


\section{Matching Cut in graphs of fixed diameter $d\ge 3$}\label{sec:diam3}


In this section, we first reduce \MC\ to \MC\ restricted to graphs of diameter~$3$. 

Given an instance $G=(V,E)$ of \MC, construct a new graph $G'=(V',E')$ as follows. 
First, let $V=\{v_1,v_2,\ldots, v_n\}$, and for each $1\le i\le n$, let $Q_i$ be a complete graph on vertex set $\{q^1_i, \ldots, q^n_i\}$. Then, $G'$ is obtained from $G$ and $Q_i, 1\le i\le n$, by adding edges between $v_i$ and all vertices in $Q_i$ (thus, for each $i$, $V(Q_i)\cup\{v_i\}$ induces a clique in $G'$) and edges between the vertex $q^j_i\in Q_i$ and the vertex $q^i_j\in Q_j$ for any pair $i,j\in\{1,\ldots,n\}, i\not=j$. Formally,

\begin{align*}
  V' &=V \cup \bigcup_{1\le i\le n} V(Q_i),\\ \smallskip
  E' &=E \cup \bigcup_{1\le i\le n} \{v_iq \mid q\in V(Q_i)\} \cup \bigcup_{1\le i\le n} E(Q_i) \cup \{q^j_iq^i_j\mid 1\le i,j\le n, i\not=j\}.
\end{align*}

See also Figure~\ref{fig:ij}. Clearly, $G'$ can be constructed from $G$ in $O(n^3)$ steps. 

\begin{figure}[htb]
\begin{center}
\begin{tikzpicture}[scale=.39]
\tikzstyle{vertexG}=[circle,inner sep=1.5pt,fill=black];
\tikzstyle{vertexY}=[draw,circle,inner sep=1.5pt];
\tikzstyle{vertex}=[draw,circle,inner sep=1.5pt]; 

\filldraw[fill=black!5!white, draw=gray] (4.5,9.4) rectangle (21.5,12.5);

\node[vertexG] (vi) at (6,10)  [label=left:$v_i$] {}; 
\node at (0.7,6) {$Q_i$};
\node[vertex] (qi1) at (2,6)  {}; 
\node[vertex] (qi2) at (4,6)  {}; 
\node[vertex] (qij) at (6,6)  {}; 
\node[] (q) at (5.4,6.7) {\small $q^j_i$}; 
\node[vertex] (qi4) at (8,6)  {}; 
\node[vertex] (qi5) at (10,6)  {}; 

\node[vertexG] (vj) at (20,10)  [label=right:$v_j$] {}; 
\node at (25.3,6) {$Q_j$};
\node[vertex] (qj1) at (16,6)  {}; 
\node[vertex] (qj2) at (18,6)  {}; 
\node[vertex] (qji) at (20,6)  {}; 
\node[] (q') at (20.6,6.7) {\small $q^i_j$}; 
\node[vertex] (qj4) at (22,6)  {}; 
\node[vertex] (qj5) at (24,6)  {}; 

\foreach \x in {qi1,qi2,qij,qi4,qi5}
{
\draw[gray, thin] (vi) -- (\x);
}

\draw[gray, thin] (qi1) -- (qi2) -- (qij) -- (qi4) -- (qi5);
\draw[gray, thin] (qi1) to[bend angle=20, bend right] (qij) to[bend angle=20, bend right] (qi5) to[bend angle=30, bend left] (qi1); 
\draw[gray, thin] (qi2) to[bend angle=30, bend right] (qi4);

\foreach \x in {qj1,qj2,qji,qj4,qj5}
{
\draw[gray, thin] (vj) -- (\x);
}

\draw[gray, thin] (qj1) -- (qj2) -- (qji) -- (qj4) -- (qj5);
\draw[gray, thin] (qj1) to[bend angle=20, bend right] (qji) to[bend angle=20, bend right] (qj5) to[bend angle=30, bend left] (qj1); 
\draw[gray, thin] (qj2) to[bend angle=30, bend right] (qj4);

\draw[thick] (qij) to[bend angle=20, bend left] (qji);

\draw[dashed, thin, gray] (vi) -- (vj);
\draw[thin, gray] (4.5,9.4) rectangle (21.5,12.5);
\node at (7,11.5) {\textcolor{gray}{$G$}};
 
\end{tikzpicture}
\end{center}
\caption{The subgraph in $G'$ induced by $\{v_i\}\cup V(Q_i)\cup\{v_j\}\cup V(Q_j)$, $i\not= j$; $v_i$ and $v_j$ are adjacent in $G'$ if and only if they are adjacent in $G$.}\label{fig:ij}
\end{figure}
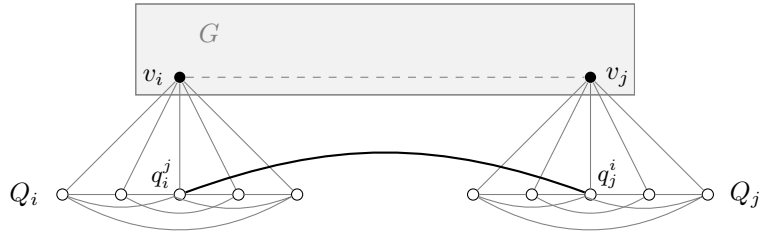

Observe that from the construction of $G'$ we have 
\begin{itemize}
\item[(p1)]
$N_{G'}(v_i)=N_G(v_i)\cup V(Q_i)$ for every vertex $v_i$ of $G$, and
\item[(p2)]
each vertex in $Q_i$ has at most one neighbor outside $V(Q_i)\cup\{v_i\}$. Namely, $N_{G'}(q_i^i)=(V(Q_i)\setminus\{q_i^i\})\cup\{v_i\}$, and for $j\not=i$,  $N_{G'}(q_i^j)=(V(Q_i)\setminus\{q_i^j\})\cup\{v_i, q_j^i\}$. 
\end{itemize}

\begin{lemma}\label{lem:G'diam3}
$G'$ has diameter $3$.
\end{lemma}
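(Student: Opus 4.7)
The plan is to show $\mathrm{diam}(G')\le 3$ by a short case analysis over the two vertex types ($v_i$'s from $G$ and $q_i^j$'s from the attached cliques), and then exhibit a pair realizing distance exactly $3$.

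First I would fix notation: every vertex of $G'$ is either some $v_i$ or some $q_i^j$, and by (p1)--(p2) the relevant adjacencies are that each $Q_i\cup\{v_i\}$ is a clique and, for $j\neq i$, $q_i^j$ has exactly one neighbor outside this clique, namely $q_j^i$. The only vertex of $Q_i$ with no neighbor outside $V(Q_i)\cup\{v_i\}$ is $q_i^i$.

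Then I would enumerate the three types of pairs. \textbf{(a)} For $v_i,v_j$ with $i\neq j$, the path $v_i\,q_i^j\,q_j^i\,v_j$ has length $3$. \textbf{(b)} For $v_i$ and $q_j^k$: if $i=j$ this is an edge; if $i\neq j$ and $k=i$, then $v_i\,q_i^j\,q_j^i$ gives distance $2$; and if $i\neq j$ and $k\neq i$, then $v_i\,q_i^j\,q_j^i\,q_j^k$ gives distance $\le 3$ (using that $q_j^i,q_j^k$ lie in the clique $Q_j$). \textbf{(c)} For $q_i^k$ and $q_j^l$: if $i=j$ they are in a common clique, and if $i\neq j$ then $q_i^k\,q_i^j\,q_j^i\,q_j^l$ is a walk of length at most $3$ in $G'$ (contracting to shorter walks when $k=j$ or $l=i$), again using the cliques $Q_i,Q_j$ and the bridging edge $q_i^jq_j^i$.

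Finally, I would show the diameter is not smaller than $3$ by looking at $q_i^i$ and $q_j^j$ for any $i\neq j$: by (p2) every neighbor of $q_i^i$ lies in $V(Q_i)\cup\{v_i\}$, and going one more step reaches only $V(Q_i)\cup\{v_i\}\cup N_G(v_i)\cup\{q_m^i:m\neq i\}$, a set that does not contain $q_j^j$. Combined with the upper bound from the case analysis, $\mathrm{diam}(G')=3$.

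There is no real obstacle here; the only thing to be careful about is not to tacitly assume $G$ is connected when arguing the upper bound, which the case analysis above avoids since every distance bound is realized by a path that uses only the newly added clique and bridging edges rather than edges of $G$.
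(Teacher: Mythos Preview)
Your proof is correct and follows essentially the same approach as the paper: both use the bridging path through $q_i^j$ and $q_j^i$ to bound distances by $3$, and both exhibit the pair $q_i^i,q_j^j$ to realize distance exactly $3$. The only difference is cosmetic---the paper collapses your three cases into a single line by noting that any non-adjacent pair $x,y$ lies in distinct cliques $V(Q_i)\cup\{v_i\}$ and $V(Q_j)\cup\{v_j\}$, so that $G'[x,q_i^j,q_j^i,y]$ is always a path of length at most three.
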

\begin{proof}
Let $x\not= y$ be two non-adjacent vertices of $G'$. Then $x\in V(Q_i)\cup\{v_i\}$ and $y\in V(Q_j)\cup\{v_j\}$ for some $i,j\in\{1,\ldots, n\}$, $i\not=j$. 
Then $G'[x,q^j_i, q^i_j,y]$ is a path in $G'$ of length at most three connecting $x$ and $y$. (See also Figure~\ref{fig:ij}.)  
Thus, $\d_{G'}(x,y)\le 3$ for all $x,y\in V'$. Note that, for $i\not= j$, $d_{G'}(q_i^i, q_j^j)=3$, hence $G'$ has diameter $3$. 
\end{proof}

\begin{lemma}\label{lem:G<->G'}
$G$ has a matching cut if and only if $G'$ has a matching cut.
\end{lemma}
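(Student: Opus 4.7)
The plan is to prove both directions by explicit constructions that exploit the gadget structure.

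For the forward direction, given a matching cut $(X,Y)$ of $G$, I would extend it to $G'$ by placing each clique $V(Q_i)$ on the same side as $v_i$: set $X' = X \cup \bigcup_{v_i \in X} V(Q_i)$ and $Y' = Y \cup \bigcup_{v_j \in Y} V(Q_j)$. To verify that $(X',Y')$ is a matching cut of $G'$, I would check the crossing edges at each vertex. By property (p1), a vertex $v_i\in X$ has neighbors in $Y'$ equal to its neighbors in $Y$ (since $V(Q_i)\subseteq X'$), and there is at most one such neighbor because $(X,Y)$ is a matching cut of $G$. By property (p2), a vertex $q_i^j\in Q_i\subseteq X'$ has at most one neighbor outside $V(Q_i)\cup\{v_i\}$, namely $q_j^i$, so it has at most one neighbor in $Y'$. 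The roles of $X$ and $Y$ are symmetric.

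For the backward direction, suppose $(X',Y')$ is a matching cut of $G'$; I may assume $n\ge 2$, since for $n=1$ the graph $G$ is a single vertex and the claim is vacuous. The key observation is that $V(Q_i)\cup\{v_i\}$ induces a clique of size $n+1\ge 3$ in $G'$, so by the general fact stated earlier, that any $K_m$ with $m\ge 3$ lies entirely on one side of a matching cut, each $V(Q_i)\cup\{v_i\}$ is monochromatic with respect to $(X',Y')$. I would then define $X = X'\cap V$ and $Y = Y'\cap V$. Both sides are non-empty: if, say, $X=\emptyset$, then every clique $V(Q_i)\cup\{v_i\}$ lies in $Y'$, forcing $X'=\emptyset$, a contradiction. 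It remains to check that $(X,Y)$ is a matching cut of $G$: for $v_i\in X$, property (p1) together with $V(Q_i)\subseteq X'$ gives $N_G(v_i)\cap Y = N_{G'}(v_i)\cap Y'$, and this set has size at most one because $(X',Y')$ is a matching cut of $G'$.

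I do not expect a serious obstacle; the construction is engineered so that the clique-blocks $V(Q_i)\cup\{v_i\}$ cannot be split by any matching cut, while the bridging edges $q_i^j q_j^i$ contribute at most one crossing per vertex thanks to (p2). The only point that requires care is ensuring non-emptiness of both sides in the backward direction, which is exactly what the monochromaticity of the cliques provides.
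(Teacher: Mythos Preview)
Your proof is correct and follows essentially the same approach as the paper: extend a matching cut of $G$ by placing each clique $V(Q_i)\cup\{v_i\}$ on the side of $v_i$ (using (p1) and (p2) to verify), and conversely use monochromaticity of these cliques to restrict a matching cut of $G'$ back to $V(G)$. Your write-up is in fact more explicit than the paper's, which leaves the verifications implicit; one tiny slip is that the $n=1$ case is not ``vacuous'' but genuinely degenerate (for $n=1$ the graph $G'$ is a single edge and does have a matching cut while $G$ does not), though this is irrelevant for the reduction.
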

\begin{proof}
Let $(X,Y)$ be a matching cut in $G$. Set 
\[
X':=X\cup \bigcup_{v_i\in X} Q_i\, \text{ and }\, Y':= Y\cup \bigcup_{v_i\in Y} Q_i.
\]
Then, by (p1) and (p2), $(X',Y')$ clearly is a matching cut in $G'$. 

Conversely, if $(X', Y')$ is a matching cut in $G'$, then for any $i$, the clique $V(Q_i)\cup\{v_i\}$ is contained in $X'$ or else in $Y'$. Hence $X=X'\cap V(G)$ and $Y=Y'\cap V(G)$ are both non-empty, and therefore $(X,Y)$ is a matching cut in $G$. 
\end{proof}

\medskip
By Lemmas~\ref{lem:G<->G'} and \ref{lem:G'diam3}, we conclude
\begin{lemma}\label{lem:diam3}
\MC\ is $\NP$-complete in graphs of diameter $3$.
\end{lemma}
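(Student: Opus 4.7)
The plan is to assemble the two preceding lemmas into a standard polynomial-time many-one reduction. Membership in $\NP$ is essentially free: given a candidate partition $(X,Y)$ of $V(G')$, the characterization recalled in the introduction says $(X,Y)$ is a matching cut if and only if the bipartite subgraph $B_{G'}(X,Y)$ is $P_3$-free, which can be verified in linear time. So the only substantive point is $\NP$-hardness.

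For $\NP$-hardness, I would reduce from the general \MC\ problem, which is $\NP$-complete by Chv\'atal~\cite{Chvatal84}. Given an instance $G=(V,E)$ with $V=\{v_1,\dots,v_n\}$, the reduction is precisely the construction of $G'$ described above. This construction is polynomial: we add $n$ cliques $Q_i$ of size $n$ together with $O(n^2)$ extra edges, so $G'$ has $O(n^2)$ vertices and $O(n^3)$ edges, and can be written down in $O(n^3)$ time.

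Correctness of the reduction is already packaged for us: by Lemma~\ref{lem:G<->G'}, $G$ has a matching cut if and only if $G'$ does, and by Lemma~\ref{lem:G'diam3}, $G'$ always has diameter exactly $3$. Therefore \MC\ restricted to graphs of diameter $3$ is $\NP$-hard, and combined with membership in $\NP$ this yields $\NP$-completeness.

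I do not anticipate any real obstacle here; the substance of the argument was already carried out in Lemmas~\ref{lem:G'diam3} and~\ref{lem:G<->G'}, and the present lemma is essentially a one-line bookkeeping statement that collects those two facts together with the known $\NP$-hardness of unrestricted \MC\ and the trivial $\NP$-membership check.
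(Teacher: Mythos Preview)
Your proposal is correct and follows exactly the paper's approach: the paper simply writes ``By Lemmas~\ref{lem:G<->G'} and~\ref{lem:G'diam3}, we conclude,'' and you have spelled out the same argument with the (routine) additions of $\NP$-membership and the polynomial size of the construction. There is nothing to add.
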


Now, let $d\ge 4$ be a fixed integer, and let $B_d$ be a bipartite chain of $d-3$ complete bipartite graphs $K_{2,2}$ as depicted in Figure~\ref{fig:Bd}. 

\begin{figure}[ht]
\begin{center}
\begin{tikzpicture}[scale=.39]
\tikzstyle{vertex}=[draw,circle,inner sep=1.5pt]; 

\node[vertex] (r1)  at (0,3)  [label=above:$s_1$] {}; 
\node[vertex] (r1') at (0,0)  [label=below:$s_1'$] {}; 
\node[vertex] (r2)  at (3,3)  [label=above:$s_2$] {}; 
\node[vertex] (r2') at (3,0)  [label=below:$s_2'$] {}; 
\node[vertex] (r3)  at (6,3)  {}; 
\node[vertex] (r3') at (6,0)  {}; 
\node (x1)  at (6.2,3) {};
\node (x1') at (6.2,0) {};
\node (x2)  at (8.8,3) {};
\node (x2') at (8.8,0) {};
\node[vertex] (rd-4)  at (9,3) {}; 
\node[vertex] (rd-4') at (9,0) {}; 
\node[vertex] (rd-3)  at (12,3) [label=above:$s_{d-2}$] {}; 
\node[vertex] (rd-3') at (12,0) [label=below:$s_{d-2}'$] {}; 

\draw (r1) -- (r2) -- (r3);
\draw (r3') -- (r2') -- (r1');
\draw (r1) -- (r2') -- (r3);
\draw (r1') -- (r2) -- (r3'); 
\draw (rd-4) -- (rd-3) -- (rd-4') -- (rd-3') -- (rd-4);  

\draw[dashed,step=.1mm] (x1) -- (x2) -- (x1') -- (x2') -- (x1); 
\end{tikzpicture}
\end{center}
\caption{The bipartite chain $B_d$ of $d-3$ $K_{2,2}$s.}\label{fig:Bd}
\end{figure}
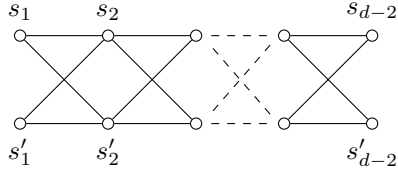

Let $G_d$ be obtained from $G'$ in Lemma~\ref{lem:G'diam3} and $B_d$ by joining the two vertices $s_1$ and $s_1'$ to all vertices in $Q_1$ of $G'$. 

Clearly, $\d_{G_d}(s_{d-2}, x)\le d$ for all $x\in V(G')$. Hence $\d_{G_d}(x,y)\le d$ for all vertices $x,y$ of $G_d$, and $\d_{G_d}(s_{d-2},v_2)=d$. Thus, $G_d$ has diameter $d$. Observe that each vertex of $B_d$ is contained, in $G_d$, in a $K_{2,3}$, hence in any matching cut $(X,Y)$ of $G_d$, $B_d+Q_1+v_1$ is contained in $G_d[X]$ or else in $G_d[Y]$. Thus, $G'$ has a matching cut if and only if $G_d$ has a matching cut. Hence, Lemma~\ref{lem:diam3} implies

\begin{theorem}\label{thm:diam3}
For any fixed $d\ge 3$, \MC\ is $\NP$-complete in graphs of diameter~$d$.
\end{theorem}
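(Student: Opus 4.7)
The plan is to combine Lemma~\ref{lem:diam3} (which handles $d=3$) with a graph-extension construction for $d\ge 4$ that preserves the existence of a matching cut while inflating the diameter from $3$ to exactly $d$. Taking the graph $G_d$ obtained from $G'$ and the bipartite chain $B_d$ by attaching $s_1,s_1'$ to every vertex of $V(Q_1)$, I would verify two properties: (i) $\mathrm{diam}(G_d)=d$, and (ii) $G_d$ has a matching cut iff $G'$ does. Together with Lemma~\ref{lem:G<->G'} this yields a polynomial-time reduction from general \MC\ to \MC\ restricted to graphs of diameter $d$, and the theorem follows since \MC\ is obviously in $\NP$ and $G_d$ is constructible in polynomial time.

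For (i) I would exploit Lemma~\ref{lem:G'diam3} together with the fact that, from $s_1$, every vertex of $G'$ is reachable in $G_d$ in at most $3$ steps via a path of the form $s_1\to q_1^j\to q_j^i\to x$ with $x\in V(Q_j)\cup\{v_j\}$. Since $\d_{B_d}(s_i,\{s_1,s_1'\})=i-1\le d-3$, this immediately gives $\mathrm{diam}(G_d)\le d$. A matching lower bound is achieved by the pair $(s_{d-2},v_j)$ for any $j\ne 1$: because the only edges leaving $V(B_d)$ are from $\{s_1,s_1'\}$ to $V(Q_1)$, any path from $s_{d-2}$ to $v_j$ must pass through $\{s_1,s_1'\}$, and a short case check shows that $v_j$ lies at distance exactly $3$ from either of these vertices in $G_d$.

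For (ii), the forward direction is easy: given a matching cut $(X,Y)$ of $G'$ with $v_1\in X$, the clique $V(Q_1)\cup\{v_1\}$ lies entirely in $X$, so placing all of $V(B_d)$ into $X$ produces a partition of $V(G_d)$ whose cut edges coincide with those of $(X,Y)$, hence a matching cut of $G_d$. The reverse direction uses the paper's observation that any $K_{p,q}$ with $p\ge 2, q\ge 3$ lies on one side of any matching cut. I would verify that every vertex of $B_d$ belongs to such a biclique in $G_d$: interior vertices $s_i,s_i'$ with $2\le i\le d-3$ via the $K_{2,4}$ on $\{s_i,s_i'\}$ vs.\ $\{s_{i-1},s_{i-1}',s_{i+1},s_{i+1}'\}$; the vertices $s_1,s_1'$ via the biclique $\{s_1,s_1'\}$ vs.\ $\{s_2,s_2'\}\cup V(Q_1)$; and $s_{d-2},s_{d-2}'$ as elements of the large side of a $K_{2,3}$ contained in the $K_{2,4}$ at the far end. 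Chaining these overlapping bicliques forces $V(B_d)\cup V(Q_1)\cup\{v_1\}$ to lie on one side of any matching cut of $G_d$, so restricting to $V(G')$ yields a matching cut of $G'$.

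The main obstacle I anticipate is the extremal pair $s_{d-2},s_{d-2}'$, which have only two neighbors in $B_d$ and so cannot sit on the small side of any $K_{2,3}$; they must instead appear on the large side, which requires using a $K_{2,3}$ hosted by the previous column pair. A small boundary case at $d=4$, where $B_d$ collapses to a single $K_{2,2}$, also needs a brief separate verification using bicliques that reach into $V(Q_1)$, but once those are handled the chain-of-bicliques argument goes through cleanly.
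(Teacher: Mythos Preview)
Your proposal is correct and follows essentially the same approach as the paper: the paper likewise derives the theorem from Lemma~\ref{lem:diam3} by attaching the bipartite chain $B_d$ to $Q_1$ in $G'$, observing $\mathrm{diam}(G_d)=d$ and that every vertex of $B_d$ lies in a $K_{2,3}$ of $G_d$ so that $B_d+Q_1+v_1$ is forced to one side of any matching cut. Your write-up is simply more detailed (including the $d=4$ boundary case); one small slip to fix is the path you quote as $s_1\to q_1^j\to q_j^i\to x$, which should read $s_1\to q_1^j\to q_j^1\to x$.
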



\section{Matching Cut in bipartite graphs of fixed diameter $d\ge 4$}\label{sec:diam4bip}

In this section, we first modify the restriction of \MC\ on diameter-$3$ graphs in the previous section to obtain a similar reduction to \MC\ on bipartite graphs of diameter $4$. 

Given a bipartite graph $G=(V_1,V_2,E)$ with two color classes $V_1$ and $V_2$, construct a new bipartite graph $G'=(V_1',V_2',E')$ as follows. 
First, let $V(G)=\{v_1,v_2,\ldots, v_n\}$, and, for each $1\le i\le n$, let $S_i=(Q_i, R_i, E_i)$ be a complete bipartite graph with color classes $Q_i=\{q^1_i, \ldots, q^n_i\}$ and $R_i=\{r^1_i, \ldots, r^n_i\}$. 
For $1\le i,j\le n, i\not=j$ write 
\[
E_{ij}=\begin{cases} 
          \{q^j_ir^i_j, r^j_iq^i_j\}, & \text{if $v_i, v_j\in V_1$ or $v_i, v_j\in V_2$},\\ \smallskip
          \{q^j_iq^i_j, r^j_ir^i_j\}, & \text{otherwise}.
        \end{cases}
\]

Then, $G'$ is obtained from $G$ and $S_i, 1\le i\le n$, by adding edges between $v_i$ and all vertices in $Q_i$ (thus, for each $i$, $S_i+v_i=(Q_i, R_i\cup\{v_i\})$ is a biclique in $G'$) and edge set $E_{ij}$ between $S_i$ and $S_j$ for any pair $i,j\in\{1,\ldots,n\}, i\not=j$. Formally,

\begin{align*}
  V' &=V\, \cup \bigcup_{1\le i\le n} Q_i\, \cup \bigcup_{1\le i\le n}  R_i,\\ \smallskip
  E' &=E\, \cup \bigcup_{1\le i\le n} \{v_iq^1_i,\ldots, v_iq^n_i\}\, \cup \bigcup_{1\le i\le n} E_i\, \cup \bigcup_{1\le i,j\le n, i\not=j} E_{ij}.
\end{align*}

See also Figures~\ref{fig:bip-ij} and~\ref{fig:bip-ij-different}. Clearly, $G'$ can be constructed from $G$ in $O(n^3)$ steps.

\begin{figure}[htb]
\begin{center}
\begin{tikzpicture}[scale=.39]
\tikzstyle{vertexG}=[circle,inner sep=1.5pt,fill=black];
\tikzstyle{vertexY}=[draw,circle,inner sep=1.5pt];
\tikzstyle{vertex}=[draw,circle,inner sep=1.5pt]; 

\filldraw[fill=black!5!white, draw=gray] (4.5,9.4) rectangle (21.5,12.5);

\node[vertexG] (vi) at (6,10)  [label=left:$v_i$] {}; 
\node at (0,6) {$Q_i$};
\draw[gray, thin] (6,6) ellipse (140pt and 18pt);
\node[vertex] (qi1) at (2,6)  {}; 
\node[vertex] (qi2) at (4,6)  {}; 
\node[vertex] (qij) at (6,6)  [label=left:\small $q^j_i$] {}; 
\node[vertex] (qi4) at (8,6)  {}; 
\node[vertex] (qi5) at (10,6)  {}; 
\node at (0,2) {$R_i$};
\draw[gray, thin] (6,2) ellipse (140pt and 18pt);
\node[vertex] (ri1) at (2,2)  {};
\node[vertex] (ri2) at (4,2)  {}; 
\node[vertex] (rij) at (6,2)  [label=left:\small $r^j_i$] {}; 
\node[vertex] (ri4) at (8,2)  {}; 
\node[vertex] (ri5) at (10,2)  {}; 

\node[vertexG] (vj) at (20,10)  [label=right:$v_j$] {}; 
\node at (26,6) {$Q_j$};
\draw[gray, thin] (20,6) ellipse (140pt and 18pt);
\node[vertex] (qj1) at (16,6)  {}; 
\node[vertex] (qj2) at (18,6)  {}; 
\node[vertex] (qji) at (20,6)  [label=right:\small $q^i_j$] {};
\node[vertex] (qj4) at (22,6)  {}; 
\node[vertex] (qj5) at (24,6)  {}; 
\node at (26,2) {$R_j$};
\draw[gray, thin] (20,2) ellipse (140pt and 18pt);
\node[vertex] (rj1) at (16,2)  {}; 
\node[vertex] (rj2) at (18,2)  {}; 
\node[vertex] (rji) at (20,2)  [label=right:\small $r^i_j$] {}; 
\node[vertex] (rj4) at (22,2)  {}; 
\node[vertex] (rj5) at (24,2)  {}; 

\foreach \x in {qi1,qi2,qij,qi4,qi5}
{
\draw[gray, thin] (vi) -- (\x);
}

\foreach \x in {qi1,qi2,qij,qi4,qi5}
{
\foreach \y in {ri1,ri2,rij,ri4,ri5}
{
\draw[gray, thin] (\x) -- (\y);
}
}

\foreach \x in {qj1,qj2,qji,qj4,qj5}
{
\draw[gray, thin] (vj) -- (\x);
}

\foreach \x in {qj1,qj2,qji,qj4,qj5}
{
\foreach \y in {rj1,rj2,rji,rj4,rj5}
{
\draw[gray, thin] (\x) -- (\y);
}
}

\draw[thick] (qij) -- (rji); \draw[thick] (rij) -- (qji);

\draw[thin, gray] (4.5,9.4) rectangle (21.5,12.5);
\node at (10,11.5) {\textcolor{gray}{$G=(V_1, V_2, E)$}};
 
\end{tikzpicture}
\end{center}
\caption{The subgraph in $G'$ induced by $\{v_i\}\cup S_i\cup\{v_j\}\cup S_j$, $i\not= j$, where $v_i$ and $v_j$ belong to the same color class of $G$.}\label{fig:bip-ij}
\end{figure}
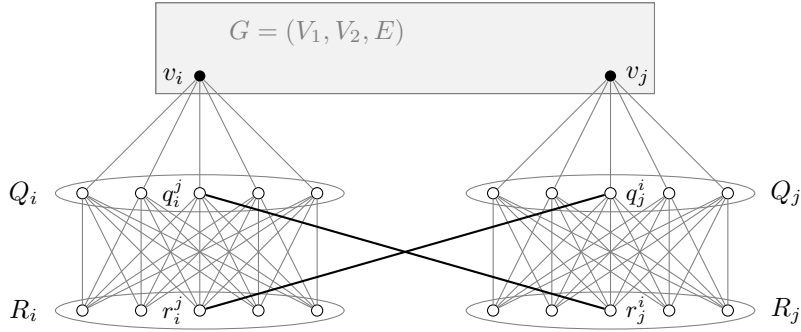

\begin{figure}[htb]
\begin{center}
\begin{tikzpicture}[scale=.39]
\tikzstyle{vertexG}=[circle,inner sep=1.5pt,fill=black];
\tikzstyle{vertexY}=[draw,circle,inner sep=1.5pt];
\tikzstyle{vertex}=[draw,circle,inner sep=1.5pt]; 

\filldraw[fill=black!5!white, draw=gray] (4.5,9.4) rectangle (21.5,12.5);

\node[vertexG] (vi) at (6,10)  [label=left:$v_i$] {}; 
\node at (0,6) {$Q_i$};
\draw[gray, thin] (6,6) ellipse (140pt and 18pt);
\node[vertex] (qi1) at (2,6)  {}; 
\node[vertex] (qi2) at (4,6)  {}; 
\node[vertex] (qij) at (6,6)  [label=left:\small $q^j_i$] {}; 
\node[vertex] (qi4) at (8,6)  {}; 
\node[vertex] (qi5) at (10,6)  {}; 
\node at (0,2) {$R_i$};
\draw[gray, thin] (6,2) ellipse (140pt and 18pt);
\node[vertex] (ri1) at (2,2)  {}; 
\node[vertex] (ri2) at (4,2)  {}; 
\node[vertex] (rij) at (6,2)  [label=left:\small $r^j_i$] {}; 
\node[vertex] (ri4) at (8,2)  {}; 
\node[vertex] (ri5) at (10,2)  {}; 

\node[vertexG] (vj) at (20,10)  [label=right:$v_j$] {}; 
\node at (26,6) {$Q_j$};
\draw[gray, thin] (20,6) ellipse (140pt and 18pt);
\node[vertex] (qj1) at (16,6)  {}; 
\node[vertex] (qj2) at (18,6)  {};
\node[vertex] (qji) at (20,6)  [label=right:\small $q^i_j$] {}; 
\node[vertex] (qj4) at (22,6)  {}; 
\node[vertex] (qj5) at (24,6)  {}; 
\node at (26,2) {$R_j$};
\draw[gray, thin] (20,2) ellipse (140pt and 18pt);
\node[vertex] (rj1) at (16,2)  {}; 
\node[vertex] (rj2) at (18,2)  {}; 
\node[vertex] (rji) at (20,2)  [label=right:\small $r^i_j$] {}; 
\node[vertex] (rj4) at (22,2)  {}; 
\node[vertex] (rj5) at (24,2)  {}; 

\foreach \x in {qi1,qi2,qij,qi4,qi5}
{
\draw[gray, thin] (vi) -- (\x);
}

\foreach \x in {qi1,qi2,qij,qi4,qi5}
{
\foreach \y in {ri1,ri2,rij,ri4,ri5}
{
\draw[gray, thin] (\x) -- (\y);
}
}

\foreach \x in {qj1,qj2,qji,qj4,qj5}
{
\draw[gray, thin] (vj) -- (\x);
}

\foreach \x in {qj1,qj2,qji,qj4,qj5}
{
\foreach \y in {rj1,rj2,rji,rj4,rj5}
{
\draw[gray, thin] (\x) -- (\y);
}
}

\draw[thick] (qij) to[bend angle=20, bend left] (qji); \draw[thick] (rij) to[bend angle=20, bend right] (rji);

\draw[dashed, thin, gray] (vi) -- (vj);
\draw[thin, gray] (4.5,9.4) rectangle (21.5,12.5);
\node at (10,11.5) {\textcolor{gray}{$G=(V_1, V_2, E)$}};
 
\end{tikzpicture}
\end{center}
\caption{The subgraph in $G'$ induced by $\{v_i\}\cup S_i\cup\{v_j\}\cup S_j$, $i\not= j$, where $v_i$ and $v_j$ belong to different color classes of $G$; $v_i$ and $v_j$ may be adjacent or not.}\label{fig:bip-ij-different}
\end{figure}
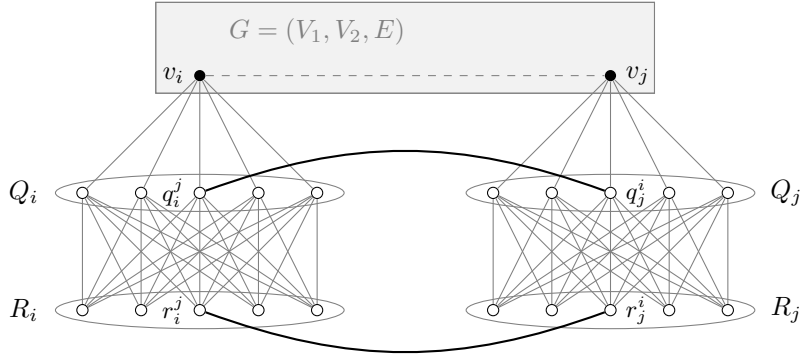

Observe that from the construction of $G'$ we have 
\begin{itemize}
\item[(p3)]
$N_{G'}(v_i)=N_G(v_i)\cup Q_i$ for every vertex $v_i$ of $G$, and
\item[(p4)]
each vertex in $S_i$ has at most one neighbor outside $V(S_i)\cup\{v_i\}$. Namely, 
$q_i^i$ and $r_i^i$ have no neighbor outside $V(S_i)\cup\{v_i\}$, and for $j\not=i$, $N_{G'}(q_i^j)=R_i\cup\{v_i, u\}$, where $u=r_j^i$ if $v_i$ and $v_j$ belong to the same color class of $G$, and $u=q_j^i$ otherwise.  $N_{G'}(r_i^j)=Q_i\cup\{w\}$, where $w=q_j^i$ if $v_i$ and $v_j$ belong to the same color class of $G$, and $w=r_j^i$ otherwise.   
\end{itemize}

\begin{lemma}\label{lem:bipG'diam4}
$G'$ is bipartite and has diameter $4$.
\end{lemma}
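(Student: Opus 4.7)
The plan is to handle bipartiteness and the diameter bound separately, and for the diameter to split into an upper bound (valid for all inputs) and a lower bound (which will force us to point to a specific pair at distance exactly~$4$).

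For bipartiteness, I would exhibit an explicit $2$-coloring. Since the biclique $S_i + v_i = (Q_i, R_i\cup\{v_i\})$ is forced, $R_i$ must receive the same color as $v_i$ and $Q_i$ the opposite color. Taking this as the coloring rule, the only edges to check are those of $E_{ij}$. The two cases in the definition of $E_{ij}$ are set up precisely to respect this bipartition: when $v_i, v_j$ lie in the same color class of $G$, the sets $Q_i, Q_j$ lie on one side and $R_i, R_j$ on the other, so the edges $q^j_ir^i_j$ and $r^j_iq^i_j$ cross; when $v_i, v_j$ lie in different color classes the sides of $Q_i$ and $Q_j$ swap, and the edges $q^j_iq^i_j$ and $r^j_ir^i_j$ again cross.

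For the diameter upper bound, I would use a \emph{bridge argument}: any two distinct ``blocks'' $V(S_i)\cup\{v_i\}$ and $V(S_j)\cup\{v_j\}$ are joined in $G'$ via the bridge pairs $\{q^j_i,r^j_i\}\subseteq S_i$ and $\{q^i_j,r^i_j\}\subseteq S_j$, together with the two edges of $E_{ij}$. The key observation is that every $x\in V(S_i)\cup\{v_i\}$ is within distance $1$ of some bridge vertex in $\{q^j_i,r^j_i\}$: the vertex $v_i$ is adjacent to $q^j_i$, every element of $Q_i$ is adjacent to $r^j_i$ via the biclique $S_i$, and every element of $R_i$ is adjacent to $q^j_i$. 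The same holds for $y$ and $\{q^i_j,r^i_j\}$, and any bridge in $\{q^j_i,r^j_i\}$ lies within distance~$2$ of any bridge in $\{q^i_j,r^i_j\}$ (use one cross edge in $E_{ij}$ plus at most one step inside the biclique $S_j$). Concatenating gives $\mathrm{dist}_{G'}(x,y)\le 1+2+1=4$; if $x,y$ lie in the same block the distance is at most~$2$.

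For the diameter lower bound I would exhibit a specific pair at distance exactly~$4$. Choosing $v_i,v_j$ in the same color class of $G$ with $i\ne j$ (which we may assume for any nontrivial instance), I claim $\mathrm{dist}_{G'}(q^i_i,q^j_j)=4$. By (p4), $N_{G'}(q^i_i)=R_i\cup\{v_i\}$ and $N_{G'}(q^j_j)=R_j\cup\{v_j\}$, so a path of length~$3$ would need an edge from $R_i\cup\{v_i\}$ to $R_j\cup\{v_j\}$. I would then rule this out by inspection: the only external neighbor of any $r^k_i$ lies in $Q_j\cup R_j$ but never in $R_j\cup\{v_j\}$ (in the same-color case it is $q^i_j\in Q_j$), and $N_{G'}(v_i)=N_G(v_i)\cup Q_i$ misses $R_j\cup\{v_j\}$ since $N_G(v_i)\subseteq V_2$ while $v_j$ lies in the same class as $v_i$. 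Combined with the upper bound this gives distance exactly~$4$. The main obstacle will be this case analysis: one has to verify that the same-color condition is genuinely needed, since in the different-color case the analogous pair turns out to be at distance only~$3$ (through the edge $q^j_iq^i_j$ in $E_{ij}$).
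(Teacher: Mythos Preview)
Your proof is correct and follows essentially the same route as the paper: an explicit bipartition, short paths built through the designated ``bridge'' vertices $q^j_i,r^j_i,q^i_j,r^i_j$ for the upper bound, and a specific same-color witness pair for the lower bound. The only cosmetic differences are that the paper takes $(v_i,r)$ with $r\in R_j\setminus\{r^i_j\}$ as its distance-$4$ witness instead of your $(q^i_i,q^j_j)$, and that your sentence ``the only external neighbor of any $r^k_i$ lies in $Q_j\cup R_j$'' should read $Q_k\cup R_k$ (the conclusion is unaffected, since for $k\ne j$ this set is disjoint from $R_j\cup\{v_j\}$, and for $k=j$ the neighbor is $q^i_j\in Q_j$).
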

\begin{proof}
By construction,
\[
V_1':= V_1\,\cup \bigcup_{v_i\in V_1} R_i\,\cup \bigcup_{v_j\in V_2} Q_j\, \text{ and }\, V_2':= V_2\,\cup \bigcup_{v_i\in V_1} Q_i\,\cup \bigcup_{v_j\in V_2} R_j  
\]
are disjoint independent sets in $G'$ with $V'=V_1'\cup V_2'$, hence $G'$ is bipartite. 

We now show that $G'$ has diameter $4$. Let $x, y$ be two non-adjacent vertices of $G'$. Recall that, for every $1\le i\le n$, $C_i=(Q_i, R_i+v_i)$ is a biclique in $G'$. Hence we may assume that $x\in C_i$ and $y\in C_j$ for some $i\not=j$. (If $x,y\in C_i$ for some $i$, then they will have distance $2$.)

Suppose that $v_i$ and $v_j$ belong to the same color class of $G$. Then, if $x\in Q_i$ then $xr_i^jq_j^iP$, where $P$ is a path of length at most two in $C_j$ connecting $q_j^i$ and $y$, is a path of length at most $4$ connecting $x$ and $y$. If $x\in R_i\cup\{v_i\}$ then $xq_i^jr_j^iP'$, where $P'$ is a path of length at most two in $C_j$ connecting $r_j^i$ and $y$, is a path of length at most $4$ connecting $x$ and $y$. Similarly, it can be seen that in case $v_i$ and $v_j$ belong to different color classes of $G$ that there is a path of length at most $4$ connecting $x$ and $y$, too. 

Thus, $\d_{G'}(x,y)\le 4$ for all vertices $x$ and $y$ in $G'$. Note that for $i\not= j$ such that $v_i, v_j$ in the same color class of $G$, $d_{G'}(v_i, r)=4$ for any $r\in R_j\setminus\{r^i_j\}$, hence $G'$ has diameter~$4$. 
\end{proof}

\begin{lemma}\label{lem:bipG<->bipG'}
$G$ has a matching cut if and only if $G'$ has a matching cut.
\end{lemma}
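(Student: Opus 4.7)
The plan is to mirror the proof of Lemma~\ref{lem:G<->G'}, using the observation from the introduction that any biclique $K_{p,q}$ with $p\ge 2, q\ge 3$ is contained in one side of any matching cut in place of the clique-containment argument used there. One may assume $n=|V|\ge 3$, as \MC\ is trivial on smaller graphs.

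For the ``only if'' direction, starting from a matching cut $(X,Y)$ of $G$ I would set
\[
X' \,:=\, X\,\cup \bigcup_{v_i\in X} V(S_i) \quad\text{and}\quad Y' \,:=\, Y\,\cup \bigcup_{v_i\in Y} V(S_i),
\]
so that each gadget $S_i$ travels with the original vertex $v_i$ to which it is attached. By (p3), the $G$-edges across $(X',Y')$ are precisely those across $(X,Y)$. The remaining cross-edges come from those $E_{ij}$ for which $v_i$ and $v_j$ end up on opposite sides. Each such $E_{ij}$ consists, by construction, of exactly two edges on four distinct gadget vertices and is therefore a matching. Moreover, by (p4), a fixed vertex $q_i^j$ (resp.\ $r_i^j$) has a unique neighbor outside $V(S_i)\cup\{v_i\}$, so for different $j$ the sets $E_{ij}$ are incident to disjoint vertices of $S_i$. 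Combining these observations with the fact that $(X,Y)$ is a matching cut of $G$, the cross-edges across $(X',Y')$ form a matching in $G'$.

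For the ``if'' direction, suppose $(X',Y')$ is a matching cut of $G'$. The subgraph induced by $V(S_i)\cup\{v_i\}$ is the biclique $C_i=(Q_i,\,R_i\cup\{v_i\})$ with color-class sizes $n$ and $n+1$, both at least $3$; hence $V(C_i)$ lies entirely on one side of the cut. Setting $X:=X'\cap V$ and $Y:=Y'\cap V$ and noting that $V'=\bigcup_i V(C_i)$, the non-emptiness of both $X'$ and $Y'$ forces some $v_i$ into each of them, so $X$ and $Y$ are both non-empty. Any $G$-edge between $X$ and $Y$ is a $G'$-edge between $X'$ and $Y'$ and thus belongs to the matching cut $(X',Y')$, so $(X,Y)$ is a matching cut of $G$.

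The only slightly delicate point is the bookkeeping in the ``only if'' direction verifying that the $E_{ij}$ contributions do not create a bad $P_3$ in $B_{G'}(X',Y')$. This is precisely what properties (p3) and (p4) were engineered to guarantee: every cross-edge incident to a gadget vertex is accounted for by a single $E_{ij}$, and each $E_{ij}$ is internally a matching.
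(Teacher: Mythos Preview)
Your proposal is correct and follows essentially the same approach as the paper's own proof: define $X'$ and $Y'$ by attaching each gadget $S_i$ to its vertex $v_i$, invoke (p3) and (p4) for the forward direction, and use that the biclique $(Q_i,R_i\cup\{v_i\})$ must lie entirely on one side for the converse. Your write-up is in fact more detailed than the paper's (which simply says ``by (p3) and (p4), $(X',Y')$ clearly is a matching cut''), and your explicit check that the $E_{ij}$ edges cannot create a bad $P_3$ is a welcome clarification.
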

\begin{proof}
Let $(X,Y)$ be a matching cut in $G$. Set 
\[
X':=X\,\cup \bigcup_{v_i\in X} V(S_i)\, \text{ and }\,  Y':= Y\,\cup \bigcup_{v_i\in Y} V(S_i).
\] 
Then, by (p3) and (p4), $(X',Y')$ clearly is a matching cut in $G'$. 

Conversely, if $(X', Y')$ is a matching cut in $G'$, then for any $i$, the biclique $(Q_i,R_i+v_i)$ is contained in $X'$ or else in $Y'$. 
Hence $X=X'\cap V(G)$ and $Y=Y'\cap V(G)$ are both non-empty, and therefore $(X,Y)$ is a matching cut in $G$. 
\end{proof}

\medskip
By Lemmas~\ref{lem:bipG<->bipG'} and \ref{lem:bipG'diam4}, we conclude

\begin{lemma}\label{lem:bipdiam4}
\MC\ is $\NP$-complete in bipartite graphs of diameter~$4$.
\end{lemma}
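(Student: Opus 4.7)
The plan is to assemble Lemma~\ref{lem:bipG'diam4} and Lemma~\ref{lem:bipG<->bipG'} into a polynomial-time reduction from \MC\ on general bipartite graphs, which is already known to be $\NP$-complete by the result of Le and Randerath~\cite{LeR03} (bipartite graphs of maximum degree four). Membership of \MC\ in $\NP$ is routine: given a candidate partition $(X,Y)$ of the vertex set, one checks in linear time, as noted in the introduction, that the bipartite subgraph $B_G(X,Y)$ is $P_3$-free.

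For $\NP$-hardness, I would first take an arbitrary bipartite graph $G=(V_1,V_2,E)$ as an instance of \MC\ on bipartite graphs, and output the graph $G'$ constructed at the beginning of this section. The construction depends only on $n=|V(G)|$ gadgets $S_i$ (each a $K_{n,n}$) together with the edge sets $E_{ij}$ chosen according to whether $v_i$ and $v_j$ lie in the same color class of $G$ or not, so the reduction is carried out in $O(n^3)$ time and produces an instance of \MC. Lemma~\ref{lem:bipG'diam4} guarantees that $G'$ lies in the target class, i.e.\ that $G'$ is bipartite with $\mathrm{diam}(G')=4$, and Lemma~\ref{lem:bipG<->bipG'} gives the required equivalence between the existence of matching cuts in $G$ and in $G'$. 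Combining these two facts yields a polynomial-time many-one reduction from \MC\ on bipartite graphs to \MC\ on bipartite graphs of diameter~$4$, so the latter is $\NP$-complete.

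There is essentially no new obstacle here: the whole content of the proof is bundled into the two preceding lemmas, and the remaining task is just to name the starting $\NP$-complete problem and observe that the reduction is polynomial. The only thing worth double-checking during write-up is that the hypothesis in the source problem indeed allows arbitrary bipartite inputs (which is what~\cite{LeR03} provides, as the reduction there is from a maximum-degree-four bipartite class), so that the target class correctly inherits the hardness.
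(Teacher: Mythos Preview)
Your proposal is correct and mirrors the paper's argument exactly: the paper's proof of Lemma~\ref{lem:bipdiam4} is simply ``By Lemmas~\ref{lem:bipG<->bipG'} and~\ref{lem:bipG'diam4}, we conclude\ldots'', and you have spelled out the routine bookkeeping (membership in $\NP$, polynomial-time construction, the Le--Randerath source of hardness for bipartite inputs) that the paper leaves implicit.
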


Now, for any fixed $d\ge 5$, let $G_d$ be obtained from $G'$ in Lemma~\ref{lem:bipG'diam4} and $B_{d-1}$ (see Section~\ref{sec:diam3}, Figure~\ref{fig:Bd}) by joining the two vertices $s_1$ and $s_1'$ to all vertices in $R_1$ of $G'$. 

Clearly, $G_d$ is bipartite. Moreover, $\d_{G_d}(s_{d-3}, x)\le d$ for all $x\in V(G')$. Hence $\d_{G_d}(x,y)\le d$ for all vertices $x,y$ of $G_d$, and $\d_{G_d}(s_{d-3},v_i)=d$ for all vertices $v_i$ not in the color class of $v_1$. Thus, $G_d$ has diameter $d$. Again, observe that each vertex of $B_{d-1}$ is contained, in $G_d$, in a $K_{2,3}$, hence in any matching cut $(X,Y)$ of $G_d$, $B_{d-1}+S_1+v_1$ is contained in $G_d[X]$ or else in $G_d[Y]$. Thus, $G'$ has a matching cut if and only if $G_d$ has a matching cut. Hence, Lemma~\ref{lem:bipdiam4} implies

\begin{theorem}\label{thm:bipdiam4}
For any fixed $d\ge 4$, \MC\ is $\NP$-complete in bipartite graphs of diameter~$d$.
\end{theorem}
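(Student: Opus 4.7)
The plan is to use Lemma~\ref{lem:bipdiam4} as the base case $d=4$ and to reduce each $d\ge 5$ to the diameter-$4$ case by attaching a bipartite chain onto the graph $G'$ of Lemma~\ref{lem:bipG'diam4}. Specifically, using the bipartite chain $B_{d-1}$ of $d-4$ copies of $K_{2,2}$ from Figure~\ref{fig:Bd}, I form $G_d$ by taking the disjoint union of $G'$ and $B_{d-1}$ and joining each of the two endpoints $s_1$ and $s_1'$ of the chain to every vertex of $R_1\subseteq V(G')$. Since $R_1$ lies in a single color class of $G'$ and $B_{d-1}$ is bipartite, placing $s_1,s_1'$ in the opposite class extends the bipartition consistently through the entire chain, so $G_d$ is bipartite.

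Next I would verify $\mathrm{diam}(G_d)=d$. Pairs of vertices inside $G'$ lie within distance $4$ by Lemma~\ref{lem:bipG'diam4}, and pairs inside $B_{d-1}$ lie within distance $d-4$. The hard case is a pair $(s_{d-3},x)$ with $x\in V(G')$: traverse $B_{d-1}$ from $s_{d-3}$ to $s_1$ in $d-4$ steps, and then argue that $\d_{G_d}(s_1,x)\le 4$ for every $x\in V(G')$, using that $s_1$ is adjacent to all of $R_1$ and that from any $r_1^j\in R_1$ one can reach any vertex of $G'$ in at most $3$ more edges via the cross-edges in $E_{1j}$ and the biclique structure of $(Q_j,R_j\cup\{v_j\})$. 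Matching this upper bound with the lower bound $\d_{G_d}(s_{d-3},v_i)=d$ for $v_i$ in the color class opposite to $v_1$ pins the diameter at exactly $d$.

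The main obstacle is showing that $G_d$ has a matching cut iff $G'$ does. The forward direction is routine: any matching cut of $G'$ extends by placing all of $V(B_{d-1})$ on the side that contains $v_1$. For the converse, paralleling the $K_{2,3}$ argument used in Section~\ref{sec:diam3}, the key observation is that every vertex of $V(B_{d-1})\cup V(S_1)\cup\{v_1\}$ lies in some $K_{2,3}$ of $G_d$: each vertex of $V(S_1)\cup\{v_1\}$ sits inside the biclique $(Q_1,R_1\cup\{v_1\})$ and hence in many $K_{2,3}$s (using $n\ge 3$); the endpoints $s_1,s_1'$ lie in the $K_{2,3}$ formed by $\{s_1,s_1'\}$ together with any three vertices of $R_1$; and each interior level $\{s_i,s_i'\}$ of the chain, combined with a neighboring level and one vertex from its other neighbor, spans a $K_{2,3}$. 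These $K_{2,3}$s overlap pairwise, so, since any $K_{p,q}$ with $p\ge 2, q\ge 3$ lies entirely on one side of any matching cut, the entire set $V(B_{d-1})\cup V(S_1)\cup\{v_1\}$ must sit on one side of any matching cut of $G_d$. Consequently the restriction to $V(G')$ is a matching cut of $G'$, and the theorem follows from Lemma~\ref{lem:bipdiam4}.
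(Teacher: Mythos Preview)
Your proposal is correct and follows essentially the same construction as the paper: attach $B_{d-1}$ to $G'$ by joining $s_1,s_1'$ to all of $R_1$, verify bipartiteness and diameter~$d$ via the path $s_{d-3}\to s_1\to R_1\to\cdots$, and use the $K_{2,3}$ observation to force $V(B_{d-1})\cup V(S_1)\cup\{v_1\}$ onto one side of any matching cut. One minor wording point: the $K_{2,3}$s do not literally overlap \emph{pairwise} but rather form a chain of overlapping copies, which is what you actually need (and what your description of the interior levels already gives) to propagate monochromaticity across the whole block.
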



\section{Matching Cut in bipartite graphs of diameter at most 3}\label{sec:diam3bip}


In this section we prove that \MC\ can be solved in polynomial time when restricted to bipartite graphs of diameter at most $3$. Our strategy is to decide in polynomial time, given two disjoint vertex subsets $A$ and $B$ of the input graph $G$, if $G$ admits a matching cut such that $A$ is contained in one part and $B$ is in the other part of the matching cut. 
To do this, we first prove a lemma (Lemma~\ref{lem:useful}) that will be useful in many cases. This useful lemma roughly says that, if after certain forcing rules are no longer applicable the connected components of the remaining part of the graph induced by the remaining unforced vertices are \lq monochromatic\rq, we will be able to reduce the problem to 2-SAT.  
As a by-product, we will derive from this lemma a new and simple polynomial-time algorithm solving \MC\ in graphs of diameter 2.

\subsection{A useful lemma}
Given a connected graph $G=(V,E)$ and two disjoint, non-empty vertex sets $A, B\subset V$ such that each vertex in $A$ is adjacent to exactly one vertex of $B$ and each vertex in $B$ is adjacent to exactly one vertex of $A$. We say a matching cut of $G$ is an \emph{$A$-$B$-matching cut} if $A$ is contained in one side and $B$ is contained in the other side of the matching cut. 
Observe that if $G$ has a matching cut, then $G$ has an $A$-$B$-matching with, for example, $A=\{a\}$ and $B=\{b\}$ for some edge $ab$ of $G$. Thus, in general, unless NP $\not=$ P, we cannot decide in polynomial time if $G$ admits an $A$-$B$-matching cut for a given pair $A, B$. However, there are some rules that force certain vertices some of which together with $A$ must belong to one side and the other together with $B$ must belong to the other side of such an $A$-$B$-matching cut (if any). We are going to describe such forcing rules.

Now assume that $A$ and $B$ are two disjoint, non-empty subsets of $V(G)$ such that each vertex in $A$ is adjacent to exactly one vertex of $B$ and each vertex in $B$ is adjacent to exactly one vertex of $A$. 
Initially, set $X:=A$, $Y:=B$ and write $R=V(G)\setminus (X\cup Y)$. The sets $A,B,X,Y$ will be extended, if possible, by adding vertices from $R$ according to the forcing rules such that the following properties hold before and after each update:
\begin{itemize}
\item[(f1)] $A\subseteq X$, $B\subseteq Y$ and $X\cap Y\not=\emptyset$;
\item[(f2)] each vertex in $A$ is adjacent to exactly one vertex of $B$ and each vertex in $B$ is adjacent to exactly one vertex of $A$;
\item[(f3)] no vertex in $X\setminus A$ is adjacent to a vertex in $Y$ and no vertex in $Y\setminus B$ is adjacent to a vertex in $X$;
\item[(f4)] any $A$-$B$-matching cut of $G$ must contain $X$ in one side and $Y$ in the other side.
\end{itemize}

These properties imply that after each update, $(X,Y)$ is an $A$-$B$-matching cut of $G[X\cup Y]$. The first three rules will detect certain vertices that ensure that $G$ cannot have an $A$-$B$-matching cut.  
\begin{description}
\item[(R1)] Let $v\in R$ be adjacent to a vertex in $A$. If $v$ is
   \begin{itemize}
      \item adjacent to a vertex in $B$, or 
      \item adjacent to (at least) two vertices in $Y\setminus B$,
   \end{itemize}
   then $G$ has no $A$-$B$-matching cut.
   
\item[(R2)] Let $v\in R$ be adjacent to a vertex in $B$. If $v$ is
   \begin{itemize}
      \item adjacent to a vertex in $A$, or 
      \item adjacent to (at least) two vertices in $X\setminus A$,
   \end{itemize}
   then $G$ has no $A$-$B$-matching cut.

\item[(R3)] If $v\in R$ is adjacent to (at least) two vertices in $X\setminus A$ and to (at least) two vertices in $Y\setminus B$, then $G$ has no $A$-$B$-matching cut.
\end{description}

The correctness of (R1), (R2) and (R3) is quite obvious. We assume that, before each application of the forcing rules (R4) and (R5) below, none of (R1), (R2) and (R3) is applicable.

\begin{description}
\item[(R4)] Let $v\in R$ be adjacent to a vertex in $A$ or to (at least) two vertices in $X\setminus A$. Then $X:=X\cup\{v\}$, $R:=R\setminus\{v\}$. If, moreover, $v$ has a unique neighbor $w\in Y\setminus B$ then $A:=A\cup\{v\}$, $B:=B\cup\{w\}$.

\item[(R5)] Let $v\in R$ be adjacent to a vertex in $B$ or to (at least) two vertices in $Y\setminus B$. Then $Y:=Y\cup\{v\}$, $R:=R\setminus\{v\}$. If, moreover, $v$ has a unique neighbor $w\in X\setminus A$ then $B:=B\cup\{v\}$, $A:=A\cup\{w\}$. 
\end{description}

To see that (R4) is correct, let, by induction hypothesis, $A$, $B$, $X$ and $Y$ fulfill the properties (f1) -- (f4). 
We have to argue that after a next application of (R4), $A$, $B$, $X$ and $Y$ still fulfill (f1) -- (f4). 
Consider an arbitrary $A$-$B$-matching cut $(X',Y')$ of $G$, and let, by (f4), $X\subseteq X'$ and $Y\subseteq Y'$, say. In particular, by (f1), $A\subseteq X'$ and $B\subseteq Y'$.  
Let $v\in R$ be adjacent to a vertex $a\in A$ or to (at least) two vertices in $X\setminus A$. (Then, since none of (R1), (R2), (R3) is applicable, $N(v)\cap B=\emptyset$ and $|N(v)\cap Y|\le 1$.) 
If $v$ is adjacent to a vertex $a\in A$, then, by (f2), $a$ has a neighbor in $B\subseteq Y'$. Hence $v$  must belong to $X'$ (as $(X',Y')$ is a matching cut of $G$), and (R4) correctly extends $X$ to $X\cup\{v\}\subseteq X'$. 
If $v$ is adjacent to two vertices in $X\setminus A\subset X'$, then clearly $v$ must belong to $X'$ (as $(X',Y')$ is a matching cut of $G$), and (R4) again correctly extends $X$ to $X\cup\{v\}\subseteq X'$. 
Moreover, in case $v$ has a unique neighbor $w\in Y\setminus B\subset Y'$, (R4) correctly extends $A$ to $A\cup\{v\}\subseteq X'$ and $B$ to $B\cup\{w\}\subseteq Y'$. Hence after updating $A$, $B$, $X$ and $Y$, $(X', Y')$ is still an $A$-$B$-matching cut of $G$ with $X\subseteq X', Y\subseteq Y'$. Thus, properties (f1)-- (f4) hold after an application of (R4). The correctness of (R5) can be seen similarly.

If none of (R1), (R2) and (R3) is applicable, then each vertex $v\in R$ has no neighbor in $A$ or has no neighbor in $B$, and $v$ has at most one neighbor in $X\setminus A$ or has at most one neighbor in $Y\setminus B$. If (R4) is not applicable, then each vertex $v\in R$ has no neighbor in $A$ and at most one neighbor in $X\setminus A$. If (R5) is not applicable, then each vertex $v\in R$ has no neighbor in $B$ and at most one neighbor in $Y\setminus B$. Thus, together with (f1) -- (f4), the following fact holds:

\begin{fact}\label{fact}
Suppose none of (R1) -- (R5) is applicable. Then 
\begin{itemize}
  \item $(X,Y)$ is an $A$-$B$-matching cut of $G[X\cup Y]$, and any $A$-$B$-matching cut of $G$ must contain $X$ in one side and $Y$ in the other side;
  \item for any vertex $v\in R$, $N(v)\cap A=N(v)\cap B=\emptyset$ and $|N(v)\cap X|\le 1,\, |N(v)\cap Y|\le 1$.
\end{itemize}
\end{fact}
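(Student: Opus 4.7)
The plan is to derive both bullets directly from the invariants (f1)--(f4) together with the stopping conditions of the forcing rules.

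I would first handle the first bullet by induction on the number of applications of (R4) and (R5). Initially one has $X=A$ and $Y=B$: the standing hypothesis on the pair $(A,B)$ gives (f1)--(f3), and (f4) holds trivially from the definition of an $A$-$B$-matching cut. The inductive step is exactly the correctness check of (R4) (and, symmetrically, (R5)) performed in the paragraphs preceding the fact, which shows that (f1)--(f4) are preserved whenever (R4) or (R5) fires. Hence at termination (f1)--(f4) still hold. Combining (f2) and (f3), the only edges of $G[X\cup Y]$ with one endpoint in $X$ and the other in $Y$ are the perfect matching between $A$ and $B$; so $(X,Y)$ is an $A$-$B$-matching cut of $G[X\cup Y]$. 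The second half of the bullet is precisely invariant (f4).

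For the second bullet I would pick an arbitrary $v\in R$ and simply read off what the non-applicability of (R4) and (R5) to $v$ means. Non-applicability of (R4) to $v$ says that $v$ has no neighbor in $A$ and at most one neighbor in $X\setminus A$; non-applicability of (R5) says symmetrically that $v$ has no neighbor in $B$ and at most one neighbor in $Y\setminus B$. Using $A\subseteq X$ and $B\subseteq Y$ from (f1), these translate immediately into $N(v)\cap A=N(v)\cap B=\emptyset$ together with $|N(v)\cap X|\le 1$ and $|N(v)\cap Y|\le 1$, as required. Note that (R1)--(R3) are only used to guarantee that we have actually reached a well-defined terminal state, not in the derivation of the two listed properties themselves.

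I do not anticipate any real difficulty: once the preservation of (f1)--(f4) under (R4) and (R5) has been verified (which is done in the text), the fact is essentially a bookkeeping consequence of the stopping conditions. The only point to keep in mind is that the induction is entered legitimately, i.e.\ that the initial pair $(A,B)$ satisfies (f2), which is guaranteed by the hypothesis that every vertex of $A$ has exactly one neighbor in $B$ and every vertex of $B$ has exactly one neighbor in $A$.
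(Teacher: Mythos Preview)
Your proposal is correct and matches the paper's own argument: the paper likewise derives the first bullet from the invariants (f1)--(f4) (whose preservation under (R4)/(R5) is established in the preceding paragraphs) and the second bullet from the non-applicability of (R4) and (R5). The only cosmetic difference is that the paper packages the justification as the short paragraph immediately preceding the Fact rather than as a standalone proof.
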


We now bound the running time for the application of the rules. Observe first that the applicability of the rules can be tested in time $\sum_{v\in R} O(\deg(v))=O(|E|)$.  Each of the rules (R1), (R2) and (R3) can be applied in constant time and applies at most once. 
Each of the rules (R4) and (R5) can be applied in time $O(\deg(v))$ and applies at most $|V|$ many times because it removes one vertex from $R$. So, the total running time for applying the rules (R1) -- (R5) until they are no longer applicable is bounded by $|V|\cdot\sum_{v\in V} O(\deg(v))=O(|V|\!\cdot\!|E|)$. 

We say that a subset $S\subseteq R$ is \emph{monochromatic} if, in any $A$-$B$-matching cut of $G$, all vertices of $S$ belong to the same side of this matching cut.

\begin{lemma}\label{lem:useful}
Suppose none of the forcing rules (R1) -- (R5) is applicable. Assuming each connected component of $G-(X\cup Y)$ is monochromatic, it can be decided in time $O(|V|\!\cdot\! |E|)$ if $G$ admits an $A$-$B$-matching cut. 
\end{lemma}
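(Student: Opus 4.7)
The plan is to reduce the decision problem to 2-SAT. By Fact~\ref{fact}, once (R1)--(R5) are exhausted, $(X,Y)$ is already an $A$-$B$-matching cut of $G[X\cup Y]$, and every $v\in R$ satisfies $N(v)\cap(A\cup B)=\emptyset$, $|N(v)\cap X|\le 1$, and $|N(v)\cap Y|\le 1$. The monochromaticity hypothesis says that in any $A$-$B$-matching cut $(X',Y')$ with $X\subseteq X'$ and $Y\subseteq Y'$, each connected component $C$ of $G-(X\cup Y)$ lies entirely in $X'$ or entirely in $Y'$. I therefore introduce a Boolean variable $x_C$ for each such component, with $x_C=\mathrm{true}$ meaning ``$C$ is placed on the $X$-side''.

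Next I identify the clauses. For $z\in X\setminus A$ and a component $C$ with $|N(z)\cap C|\ge 2$, assigning $C$ to the $Y$-side would give $z$ at least two neighbors in $Y'$, so $x_C$ is forced to $\mathrm{true}$; symmetrically, $|N(z)\cap C|\ge 2$ for some $z\in Y\setminus B$ forces $x_C$ to $\mathrm{false}$. For each $z\in X\setminus A$ and each pair of distinct components $C_1,C_2$ each meeting $N(z)$, at most one of them can go to the $Y$-side, yielding the binary clause $x_{C_1}\lor x_{C_2}$; for $z\in Y\setminus B$ we analogously obtain $\neg x_{C_1}\lor\neg x_{C_2}$. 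Call the resulting 2-SAT formula $\Phi$.

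The key step is to verify that $G$ admits an $A$-$B$-matching cut extending $(X,Y)$ if and only if $\Phi$ is satisfiable. Necessity of each clause is immediate from the matching cut definition. For sufficiency, given a satisfying assignment I set $X':=X\cup\bigcup_{x_C=\mathrm{true}}C$ and $Y':=Y\cup\bigcup_{x_C=\mathrm{false}}C$ and check that every vertex has at most one cross-neighbor: vertices of $A$ and $B$ already have their unique partner across and, because (R4) and (R5) are no longer applicable, no neighbors in $R$; vertices of $X\setminus A$ (resp.\ $Y\setminus B$) have no neighbors in $Y$ (resp.\ $X$) by (f3) and, thanks to the two types of clauses, acquire at most one cross-neighbor from $R$; vertices in $R$ have at most one neighbor in $X$ and one in $Y$ by Fact~\ref{fact}, and no neighbors in other components of $G-(X\cup Y)$.

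For the running time, the components are computed in $O(|V|+|E|)$ and the unit clauses are collected within the same bound. For each $z\in(X\cup Y)\setminus(A\cup B)$ the number of binary clauses contributed is at most $\binom{k_z}{2}$, where $k_z\le\deg(z)$ is the number of distinct components meeting $N(z)$, so the total number of clauses is $O\bigl(\sum_z \deg(z)^2\bigr)=O(|V|\cdot|E|)$. Since 2-SAT can be solved in time linear in the formula size, the whole procedure runs in $O(|V|\cdot|E|)$ time. The main subtlety is the sufficiency direction: one must verify that the listed pairwise constraints together with the unit clauses capture every matching-cut obstruction, which relies crucially on Fact~\ref{fact}, as it guarantees that each $v\in R$ contributes at most one cross-edge on each side, so the only possible violations are those blocked by the two families of clauses.
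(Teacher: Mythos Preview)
Your proposal is correct and follows essentially the same 2-SAT reduction as the paper. The only cosmetic difference is that you use a single Boolean variable $x_C$ per component (with $\neg x_C$ encoding the $Y$-side), whereas the paper introduces two variables $x_C,y_C$ together with the clauses $(x_C\lor y_C)$ and $(\neg x_C\lor\neg y_C)$ to make them complementary; the resulting clause sets and the correctness argument are otherwise identical.
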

\begin{proof} 
Let $Z$ be a connected component of $G-(X\cup Y)$, and let $(X',Y')$ be an $A$-$B$-matching cut of $G$ with $A\subseteq X'$ and $B\subseteq Y'$. Note, by Fact~\ref{fact}, that $X\subseteq X'$ and $Y\subseteq Y'$. Since $Z$ is monochromatic, we have 

\begin{itemize}
\item $Z\subseteq X'$ whenever some vertex in $X\setminus A$ has at least two neighbors in $Z$. 
\item Similarly, $Z\subseteq Y'$ whenever some vertex in $Y\setminus B$ has at least two neighbors in $Z$. 
\item If a vertex in $X\setminus A$ has neighbors in two connected components of $G-(X\cup Y)$, then at least one of these components is contained in $X'$. 
\item Similarly, if a vertex in $Y\setminus B$ has neighbors in two connected components of $G-(X\cup Y)$, then at least one of these components is contained in $Y'$. 
\end{itemize}

Thus, we can decide if $G$ admits a matching cut $(X',Y')$ such that $X\subseteq X', Y\subseteq Y'$, by solving the following instance $F(G)$ of the 2-SAT problem.

\begin{itemize}
\item For each connected component $C$ of $G-(X\cup Y)$, create two Boolean variables $x_C$ and $y_C$. The intention is that $x_C$ is set to $\mathrm{true}$ if $C$ must go to $X$ and $y_C$ is set to $\mathrm{true}$ if $C$ must go to $Y$. Then $(x_C\lor y_C)$ and $(\neg x_C\lor \neg y_C)$ are two clauses of the formula $F(G)$.
\item For each connected component $C$ of $G-(X\cup Y)$ with $|N(v)\cap C|\ge 2$ for some $v\in X\setminus A$, $(x_C)$ is a clause of the formula $F(G)$. This clause ensures that in this case, $C$ must go to $X$.
\item For each connected component $C$ of $G-(X\cup Y)$ with $|N(w)\cap C|\ge 2$ for some $w\in Y\setminus B$, $(y_C)$ is a clause of the formula $F(G)$. This clause ensures that in this case, $C$ must go to $Y$.
\item For each two connected components $C\not=D$ of $G-(X\cup Y)$ having a common neighbor in $X\setminus A$, $(x_C\lor x_D)$ is a clause of the formula $F(G)$. This clause ensures that in this case, at least one of $C$ and $D$ must go to $X$.
\item For each two connected components $C\not=D$ of $G-(X\cup Y)$ having a common neighbor in $Y\setminus B$, $(y_C\lor y_D)$ is a clause of the formula $F(G)$. This clause ensures that in this case, at least one of $C$ and $D$ must go to $Y$.
\end{itemize}

\begin{claim} 
$G$ admits a matching cut $(X',Y')$ such that $X\subseteq X', Y\subseteq Y'$ if and only if the $2$-CNF formula $F(G)$ is satisfiable. 
\end{claim}
{\em Proof of the Claim:}\, 
First, suppose $(X',Y')$ is a matching cut of $G$ with $X\subseteq X', Y\subseteq Y'$. 
For each connected component $C$ of $G-(X\cup Y)$, set $\mathrm{b}(x_C)=\mathrm{true}$ and $\mathrm{b}(y_C)=\mathrm{false}$ if $C\subset X'$, and $\mathrm{b}(x_C)=\mathrm{false}$ and $\mathrm{b}(y_C)=\mathrm{true}$ if $C\subset Y'$. Since each connected component $C$ is contained in $X'$ or else in $Y'$, the assignment $\mathrm{b}$ is well-defined. Moreover, $\mathrm{b}$ is a satisfying truth assignment for $F(G)$: 

\begin{itemize}
\item For each connected component $C$, the two corresponding clauses $(x_C\lor y_C)$ and $(\neg x_C\lor \neg y_C)$ are obviously satisfied by~$\mathrm{b}$.
\item For each connected component $C$ such that $|N(v)\cap C|\ge 2$ for some $v\in X$, $C\subset X'$ because $(X',Y')$ is a matching cut and $C$ is monochromatic. Hence the corresponding clause $(x_C)$ is satisfied by~$\mathrm{b}$.
\item Similarly, for each connected component $C$ such that $|N(w)\cap C|\ge 2$ for some $w\in Y$, $C\subset Y'$. Hence the corresponding clause $(y_C)$ is satisfied by~$\mathrm{b}$.
\item For each two connected components $C\not=D$ having a common neighbor in $X$, $C\subset X'$ or $D\subset X'$ because $(X',Y')$ is a matching cut and $C$ and $D$ are monochromatic. Hence the corresponding clause $(x_C\lor x_D)$ is satisfied by $\mathrm{b}$.
\item Similarly, for each two connected components $C\not=D$ having a common neighbor in $Y$, $C\subset Y'$ or $D\subset Y'$. Hence the corresponding clause $(y_C\lor y_D)$ is satisfied by~$\mathrm{b}$.
\end{itemize}

Thus, $F(G)$ is satisfied by $\mathrm{b}$.

Second, suppose $F(G)$ is satisfied, and let $\mathrm{b}$ be a satisfying truth assignment for $F(G)$. Then let $X'$ consist of $X$ and all connected components $C$ of $G-(X\cup Y)$ with $\mathrm{b}(x_C)=\mathrm{true}$ and let $Y'$ consist of $Y$ and all connected components $C$ of $G-(X\cup Y)$ with $\mathrm{b}(y_C)=\mathrm{true}$. Since for each connected component $C$ of $G-(X\cup Y)$, the two corresponding clauses $(x_C\lor y_C)$ and $(\neg x_C\lor \neg y_C)$ are satisfied by $\mathrm{b}$, $(X',Y')$ is a partition of $V(G)$. 
Moreover $(X',Y')$ is a matching cut of $G$: Assume, for a contradiction, some $v\in X'$ has two neighbors $u_1, u_2$ in $Y'$. 

If $v\in X$, then, by Fact~\ref{fact}, $v\in X\setminus A$ and $u_1,u_2\in Y'\setminus Y$. Let $C_1$ and $C_2$ be the connected components containing $u_1$ and $u_2$, respectively. If $C_1=C_2=:C$, then $(x_C)$ is a clause of $F(G)$ and therefore $\mathrm{b}(x_C)=\mathrm{true}$, hence $C\subset X'$, contradicting $u_1,u_2\in Y'$. If $C_1\not=C_2$, then $(x_{C_1}\lor x_{C_2})$ is a clause of $F(G)$, hence $\mathrm{b}(x_{C_1})=\mathrm{true}$ or $\mathrm{b}(x_{C_2})=\mathrm{true}$, contradicting $u_1,u_2\in Y'$ again. 

If $v\in X'\setminus X$, then let $C$ be the connected component containing $v$. Then, by Fact~\ref{fact} again, at most one of $u_1, u_2$ is in $Y$, say $u_1\in Y'\setminus Y$. Thus, $u_1\in C$, hence $C\not\subset X'$ and $C\not\subset Y'$. This contradicts the definition of $(X',Y')$.

Thus, each vertex in $X'$ has at most one neighbor in $Y'$, and, similarly, each vertex in $Y'$ has at most one neighbor in $X'$. That is, $(X',Y')$ is a matching cut of $G$ with $X\subseteq X', Y\subseteq Y'$. 

The proof of the Claim is complete.

Note that $F(G)$ has $O(|V|)$ variables and $O(|V|^2)$ clauses and 
can be constructed in time $O(|V|(|V|+|E|))=O(|V|\!\cdot\!|E|)$. 
Since 2-SAT can be solved in linear time (cf.~\cite{APT,DavPut,EvenIS76}), by the Claim above 
we can decide in time $O(|V|\!\cdot\!|E|)$ if $G$ admits an $A$-$B$-matching cut. 

Lemma~\ref{lem:useful} is proved. 
\end{proof}

\subsection{Diameter 2 graphs: A new, simple and faster polynomial-time algorithm}

Let $G=(V,E)$ be a graph of diameter 2. Choose an edge $ab$ of $G$, and apply rules (R1) -- (R5) for $A:=\{a\}$, $B:=\{b\}$ as long as possible. If (R1) or (R2) or (R3) is applicable, then clearly $G$ has no $A$-$B$-matching cut. So let us assume that none of (R1), (R2) and (R3) was ever applied and none of (R4) and (R5) is applicable. Then each connected component $Z$ of $G-(X\cup Y)$ is monochromatic. To see this, let $(X',Y')$ be an $A$-$B$-matching cut of $G$ with $X\subset X'$, $Y\subset Y'$. By Fact~\ref{fact}, any vertex in $A\cup B$ is non-adjacent to any vertex in $Z$, any vertex in $A$ has neighbors only in $X\cup B$, any vertex in $B$ has neighbors only in $Y\cup A$. Since $G$ has diameter 2, therefore, $N(v)\cap N(a)\not=\emptyset$ for all $v\in Z$ and $a\in A$. Thus, $v$ must have a neighbor  in $X\setminus A$. Similarly, each vertex $v\in Z$ must have a neighbor in $Y\setminus B$. Now, suppose to the contrary that $Z$ is not monochromatic. Then, by connectedness, there is an edge $uv$ in $Z$ with $u\in X'$ and $v\in Y'$, say. But then $u, v$ and a neighbor of $v$ in $X\subseteq X'$ induce a bad $P_3$, contradicting the assumption that $(X',Y')$ is a matching cut. 

Thus, any connected component of $G-(X\cup Y)$ is monochromatic. Therefore, by Lemma~\ref{lem:useful}, it can be decided in time $O(|V|\!\cdot\!|E|)$ if $G$ has an $A$-$B$-matching cut. Since we have at most $|E|$ many choices for the edge $ab$, we conclude that \MC\ can be solved in time $O(|V|\!\cdot\!|E|^2)$ for graphs of diameter two. 
We remark that the known algorithm posed in~\cite{BorowieckiJ08} has slower running time $O(|V|^2|E|^3)$. Moreover, in comparison to their algorithm, ours is much simpler.

\subsection{Diameter 3 bipartite graphs} 

We are now going to describe how to solve \MC\ in polynomial time when restricted to bipartite graphs of diameter at most $3$. We will use the following characterization of diameter-3 bipartite graphs.
\begin{fact}\label{fact:diam3bip} 
Let $G=(V_1,V_2,E)$ be a connected bipartite graph with color classes $V_1$ and $V_2$. Then  $G$ has diameter at most $3$ if and only if, for each $i=1,2$ and for every two vertices $u,v\in V_i$, $N(u)\cap N(v)\not=\emptyset$.
\end{fact}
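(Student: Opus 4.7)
The plan is to prove both implications directly using the parity of path lengths in bipartite graphs. The forward direction exploits the fact that any path between two vertices in the same color class has even length; the backward direction builds a short path by cases on whether the endpoints lie in the same color class.

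For the forward direction, assume $\mathrm{diam}(G)\le 3$ and fix $i\in\{1,2\}$ together with two distinct vertices $u,v\in V_i$. Since $G$ is bipartite, any $u$–$v$ path alternates between $V_1$ and $V_2$, so its length has the same parity as the difference between the color indices of its endpoints, which is $0$ here. Hence $\d_G(u,v)$ is a positive even integer; as it is at most $3$, it must be exactly $2$. A shortest $u$–$v$ path of length $2$ passes through some vertex $w\in V_{3-i}$ adjacent to both $u$ and $v$, so $w\in N(u)\cap N(v)$, as required.

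For the backward direction, assume the common-neighbor condition and let $x,y\in V$ be arbitrary. If $x=y$ there is nothing to show, so assume $x\ne y$. If $x$ and $y$ lie in the same color class $V_i$, the hypothesis directly furnishes a vertex in $N(x)\cap N(y)$, giving $\d_G(x,y)\le 2$. If instead $x\in V_1$ and $y\in V_2$ (the other case being symmetric), then either $xy\in E$ and $\d_G(x,y)=1$, or we pick any neighbor $y'$ of $y$ (which exists because $G$ is connected and $|V|\ge 2$); then $y'\in V_1$, so by the hypothesis applied to $x,y'\in V_1$ there exists $w\in N(x)\cap N(y')$. The walk $x\,w\,y'\,y$ has length $3$, and since $xy\notin E$ we cannot have $w=y$, so this is a genuine path, yielding $\d_G(x,y)\le 3$.

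The only slightly delicate point is the argument that the backward direction's length-$3$ walk is really a path: one needs to rule out $w=y$, which is handled by the assumption $xy\notin E$, and $w=x$ is excluded automatically since $w$ lies in the opposite color class from $x$. A small comment should also note that the condition is vacuous when $|V_i|\le 1$, but in that case $G$ is a star (or a single edge), which trivially has diameter at most $2$. No other cases arise, so combining both directions completes the proof.
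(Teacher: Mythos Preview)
Your proof is correct and follows essentially the same approach as the paper's: the forward direction uses the parity of path lengths in bipartite graphs to force distance~$2$, and the backward direction picks a neighbor of one endpoint and applies the common-neighbor hypothesis to obtain a path of length at most~$3$. Your version is slightly more detailed (you verify the length-$3$ walk is a genuine path and comment on the degenerate case $|V_i|\le 1$), but the underlying argument is the same.
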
 
\begin{proof}
First, let $G=(V_1,V_2,E)$ have diameter at most $3$. Consider two vertices $u, v\in V_1$. Then, as any $u,v$-path in $G$ has even length, there must be an $u,v$-path of length $2$, i.e., $N(u)\cap N(v)\not=\emptyset$. Conversely, suppose that $N(u)\cap N(v)\not=\emptyset$ for every two vertices $u,v\in V_i$. Consider two non-adjacent vertices $x, y$ with $x\in V_1$ and $y\in V_2$. Then, as $G$ is connected, $x$ has a neighbor $y'\in V_2-y$. Hence, since $N(y)\cap N(y')\not=\emptyset$, there is an $x,y$-path of length $3$. Thus, there is a path of length at most $3$ between every two vertices of $G$, i.e., $G$ has diameter at most $3$.  
\end{proof}

\smallskip
Let $G=(V_1,V_2,E)$ have diameter at most $3$. Since graphs having a bridge have a matching cut, we may assume that $G$ is 2-edge connected. Hence every matching cut of $G$, if any, must have at least two edges. Our  algorithm consists of two phases. 
In the phase 1, we will check if $G$ has a matching cut containing two edges $a_1b_1, a_2b_2$ such that $a_1,a_2\in V_1$, $b_1,b_2\in V_2$ and $\{a_1,b_2\}$ is in one part and $\{a_2,b_1\}$ is in the other part on the matching cut. To do this, we start with $A=\{a_1,b_2\}$ and $B=\{a_2,b_1\}$ and apply the forcing rules as long as we can. Then we will see that, if $G$ has an $A$-$B$-matching cut, the condition of Lemma~\ref{lem:useful} will be fulfilled and so we can reduce the problem to 2-SAT.  
In case phase 1 is unsuccessful, that is, $G$ has no $A$-$B$-matching cut such that both $A$ and $B$ contain vertices of both color classes $V_1$ and $V_2$, phase 2 will be started. In the phase 2, we will check if $G$ has an $A$-$B$-matching cut such that $A\subseteq V_1$ and $B\subseteq V_2$ (or $A\subseteq V_2$ and $B\subseteq V_1$). In this phase, we will see that such an $A$-$B$-matching cut of $G$, if any, can easily be found because the partition of unforced vertices is uniquely determined by the color classes of $G$. The details as follows.

\paragraph{Phase 1}  
Choose two edges $a_1b_1, a_2b_2\in E$ such that $a_1, a_2\in V_1$ and $b_1,b_2\in V_2$. Set $X=A=\{a_1, b_2\}$, $Y=B=\{a_2,b_1\}$ and write $R=V(G)\setminus (X\cup Y)$. Apply (R1) -- (R5) as long as possible, and let us assume that none of (R1), (R2) and (R3) was ever applied. 

Then each connected component $Z$ of $G-(X\cup Y)$ is monochromatic. To see this, consider an arbitrary vertex $v\in Z$. If $v\in V_1$, then, since $a_1\in A\cap V_1$, $N(v)\cap N(a_1)\not=\emptyset$ by Fact~\ref{fact:diam3bip}. By Fact~\ref{fact}, $N(v)\cap (A\cup B)=\emptyset$ and $N(a_1)\subseteq X\cup B$, hence $N(v)\cap N(a_1)\subseteq X\setminus A$, i.e., $v$ has a neighbor in $X\setminus A$. Similarly, since $a_2\in B\cap V_1$, $v$ has a neighbor in $Y\setminus B$. Similarly, if $v\in V_2$, then, since $b_1\in A\cap V_2$ and $b_2\in B\cap V_2$, $v$ has a neighbor in $X\setminus A$ and a neighbor in $Y\setminus B$, too. Thus, every vertex in $Z$ has a neighbor in $X$ and a neighbor in $Y$. Therefore, by the same argument explained in the diameter 2 case, $Z$ is monochromatic: let $(X',Y')$ be an $A$-$B$-matching cut of $G$ with, by Fact~\ref{fact}, $X\subseteq X'$ and $Y\subseteq Y'$, say, and assume to the contrary that $Z\cap X'\not=\emptyset$ and $Z\cap Y'\not=\emptyset$. Then, by connectedness, there is an edge $uv$ in $Z$ with $u\in X'$, $v\in Y'$, say. But then $u, v$ and a neighbor of $v$ in $X\subseteq X'$ induce a bad $P_3$, contradicting the assumption that $(X',Y')$ is a matching cut. 

Hence, by Lemma~\ref{lem:useful}, we can decide in time $O(|V|\!\cdot\!|E|)$ if $G$ has an $A$-$B$-matching cut. Since there are at most $|E|^2$ choices for $A$ and $B$, we conclude that, with phase 1, we can decide in time $O(|V|\!\cdot\!|E|^3)$ if $G$ has a matching cut $(X',Y')$ containing two edges $a_1b_1, a_2b_2$ such that $a_1,a_2\in V_1$, $b_1,b_2\in V_2$ and $\{a_1,b_2\}\subseteq X'$ and $\{a_2,b_1\}\subseteq Y'$. 

\paragraph{Phase 2}  
In this second phase we assume that phase 1 is unsuccessful, that is, $G$ has no matching cut $(X',Y')$ containing two edges $a_1b_1, a_2b_2$ such that $a_1,a_2\in V_1$, $b_1,b_2\in V_2$ and $\{a_1,b_2\}\subseteq X'$ and $\{a_2,b_1\}\subseteq Y'$.

Choose an edge $ab\in E$ with $a\in V_1$ and $b\in V_2$. Set $X=A=\{a\}$, $Y=B=\{b\}$ and write $R=V(G)\setminus (X\cup Y)$. Apply (R1) -- (R5) as long as possible, and let us assume that none of (R1), (R2) and (R3) was ever applied. (Note that, as $G$ is $2$-edge connected, $A$ and $B$ will be updated at least once by forcing rule (R4) or (R5), assuming $G$ has an $A$-$B$-matching cut.)

Then, by Facts~\ref{fact} and \ref{fact:diam3bip} (cf. also phase 1), every vertex $v\in R_1=R\cap V_1$ has a neighbor in $X$ (as $a\in A\cap V_1$) and every vertex $w\in R_2=R\cap V_2$ has a neighbor in $Y$ (as $b\in B\cap V_2$). Thus, assuming $G$ has an $A$-$B$-matching cut $(X',Y')$ with $X\subseteq X'$ and $Y\subseteq Y'$, $R_1$ must be included in $X'$ and $R_2$ must be included in $Y'$. For, if $v\in R_1$ was in $Y'$, then the $A$-$B$-matching cut $(X',Y')$ would contain the edges $ab$ and $uv$, where $u$ is the neighbor of $v$ in $X\subseteq X'$, with $a,v\in V_1$ and $b,u\in V_2$, contradicting the assumption that phase 1 was unsuccessful. The case of $R_2$ is completely similar. 
Therefore, $G$ has an $A$-$B$-matching cut $(X',Y')$ with $X\subseteq X', Y\subseteq Y'$ if and only if $(X\cup R_1, Y\cup R_2)$ is a matching cut. As the second property can be checked in linear time, and there are at most $|E|$ choices for $A$ and $B$, we conclude that phase 2 can be performed in time $O(|V|\!\cdot\!|E|^2)$ for deciding if $G$ has a matching cut. 

Putting all together we obtain:
\begin{theorem}\label{thm:diam3bip}
\MC\ can be solved in time $O(|V|\!\cdot\!|E|^3)$ when restricted to bipartite graphs of diameter at most~$3$.
\end{theorem}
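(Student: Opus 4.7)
The plan is to exploit Lemma~\ref{lem:useful} in two phases, corresponding to the two structurally distinct ways a matching cut $(X',Y')$ of a bipartite graph $G=(V_1,V_2,E)$ can relate to the bipartition: either both sides meet both color classes, or one side lies entirely in $V_1$ and the other in $V_2$. Since graphs with a bridge trivially have a matching cut, we may assume $G$ is $2$-edge connected so every matching cut contains at least two edges. The driving fact throughout is Fact~\ref{fact:diam3bip}: in a bipartite diameter-$\le 3$ graph any two vertices of the same color class share a neighbor, which will be what forces connected components of $G-(X\cup Y)$ to be monochromatic once the forcing rules stabilize.

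For \emph{Phase 1}, I would enumerate all $O(|E|^2)$ ordered pairs of edges $a_1b_1, a_2b_2$ (with $a_1,a_2\in V_1$, $b_1,b_2\in V_2$), initialize $A=X=\{a_1,b_2\}$ and $B=Y=\{a_2,b_1\}$, and exhaustively apply rules (R1)--(R5); if (R1), (R2), or (R3) ever fires, this choice is discarded. Otherwise I claim every connected component $Z$ of $G-(X\cup Y)$ is monochromatic. The argument: for any $v\in Z\cap V_1$, because $a_1\in A\cap V_1$ and $G$ has diameter $\le 3$, Fact~\ref{fact:diam3bip} gives a common neighbor of $v$ and $a_1$; by Fact~\ref{fact} that neighbor lies in $X\setminus A$, and by the symmetric argument with $a_2\in B\cap V_1$, $v$ also has a neighbor in $Y\setminus B$. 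The case $v\in Z\cap V_2$ is symmetric using $b_1\in A\cap V_2$ and $b_2\in B\cap V_2$. Since every vertex of $Z$ thus has a neighbor in $X$ and a neighbor in $Y$, the usual bad-$P_3$ argument (if $Z$ had vertices on both sides of a putative matching cut $(X',Y')$, connectedness would give an edge $uv$ in $Z$ with $u\in X'$, $v\in Y'$, and a neighbor of $v$ in $X\subseteq X'$ would complete a bad $P_3$) shows $Z$ is monochromatic. Lemma~\ref{lem:useful} then decides the existence of an $A$-$B$-matching cut in time $O(|V|\!\cdot\!|E|)$, so Phase 1 runs in $O(|V|\!\cdot\!|E|^3)$.

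For \emph{Phase 2}, assume Phase 1 found nothing, so no matching cut splits the two color classes between both sides in the pattern above. For each edge $ab\in E$ ($a\in V_1$, $b\in V_2$) I would start with $A=X=\{a\}$, $B=Y=\{b\}$ and again apply (R1)--(R5) to saturation. Using $2$-edge-connectivity, if an $A$-$B$-matching cut exists, (R4) or (R5) must enlarge $A$ or $B$ at least once. Now, by Fact~\ref{fact:diam3bip}, every remaining $v\in R\cap V_1$ shares a neighbor with $a$, which (by Fact~\ref{fact}) must lie in $X\setminus A$; symmetrically every $w\in R\cap V_2$ has a neighbor in $Y\setminus B$. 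The Phase 1 failure then forces $R\cap V_1\subseteq X'$ and $R\cap V_2\subseteq Y'$: a remaining $v\in R\cap V_1$ assigned to $Y'$ together with its neighbor $u\in X\setminus A\subseteq X'$ would produce a second edge $uv$ of the matching cut with $v\in V_1,\,u\in V_2$, reproducing exactly the pattern of Phase 1. Thus the candidate partition $(X\cup (R\cap V_1),\,Y\cup (R\cap V_2))$ is uniquely determined and can be verified to be a matching cut in linear time. With $O(|E|)$ choices of $ab$, Phase 2 runs in $O(|V|\!\cdot\!|E|^2)$.

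The main obstacle, and the whole reason for splitting into two phases, is precisely the conclusion of monochromaticity: with a single seed edge $\{a\}$-$\{b\}$ one cannot guarantee that each component of $G-(X\cup Y)$ touches both sides (only one color class of each of $X$ and $Y$ is populated, so Fact~\ref{fact:diam3bip} only yields a neighbor on one side), so Lemma~\ref{lem:useful} is not directly applicable. Phase 1 handles the case where \emph{both} color classes appear on each side of the cut (giving the needed seeds for Fact~\ref{fact:diam3bip} to apply symmetrically), while Phase 2 exploits the contrapositive---failure of Phase 1 rigidifies the partition so much that enumeration and a linear check suffice. Combining the two phases yields the $O(|V|\!\cdot\!|E|^3)$ bound claimed in Theorem~\ref{thm:diam3bip}.
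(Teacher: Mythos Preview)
Your proposal is correct and follows essentially the same two-phase approach as the paper's proof: Phase~1 seeds with $A=\{a_1,b_2\}$, $B=\{a_2,b_1\}$ so that both color classes are represented on both sides, establishes monochromaticity of the components of $G-(X\cup Y)$ via Fact~\ref{fact:diam3bip} and Fact~\ref{fact}, and invokes Lemma~\ref{lem:useful}; Phase~2 seeds with a single edge and uses the failure of Phase~1 to rigidify the residual partition to $(X\cup(R\cap V_1),\,Y\cup(R\cap V_2))$, which is checked directly. One small slip: in your Phase~1 argument for $v\in Z\cap V_2$ you wrote ``$b_1\in A\cap V_2$ and $b_2\in B\cap V_2$'', but with $A=\{a_1,b_2\}$ and $B=\{a_2,b_1\}$ the indices are reversed---it is $b_2\in A\cap V_2$ and $b_1\in B\cap V_2$.
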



\section{Concluding remarks}\label{sec:con}


In this paper we have completely determined the computational complexity of \MC\ for graphs and bipartite graphs with respect to diameter constraint: \MC\ is $\NP$-complete when restricted to graphs of diameter $d$, for fixed $d\ge 3$, and to bipartite graphs of diameter $d$, for fixed diameter $d\ge 4$. In the other cases, \MC\ can be solved in polynomial time. 

We remark that \MC\ can be solved in linear time in planar graphs (and in graphs with fixed genus) of fixed diameter. This is because a planar graph with diameter $d$ has tree-width at most $3d-2$ (\cite{Eppstein99}). (More generally, a graph with diameter $d$ and genus $g$ has tree-width $O(gd)$; see~\cite{Eppstein00}.) In~\cite{Bonsma09}, it is shown that \MC\ can be expressed in monadic second order logic (MSOL), and it is well-known (\cite{ALS}) that all graph properties definable in MSOL can be decided in linear time for classes of graphs with bounded tree-width, when a tree-decomposition is given. It is also well-known (\cite{Bodlaender}) that a tree-decomposition of bounded width of a given graph can be found in linear time. Combining these facts, it follows that \MC\ can be solved in linear time for planar graphs (and in graphs with fixed genus) of fixed diameter.

Finally, we note that, in contrast, the computational complexity of \MC\ for graphs of given girth is still incompletely determined. By results due to Bonsma (\cite{Bonsma09}), 
\MC\ is $\NP$-complete for planar graphs of girth at most five and polynomially solvable for \emph{planar} graphs of girth at least six. (In fact, Bonsma showed that all planar graphs of girth $\ge 6$ have a matching cut.) In particular, \MC\ is $\NP$-complete for graphs of girth $\le 5$.

\smallskip
\noindent  
\textbf{Open Question:} What is the computational complexity of \MC\ for 
 graphs of girth at least six?


\begin{thebibliography}{}
\bibitem{ACGH}
J\'ulio C. Ara\'ujo, Nathann Cohen, Fr\'ed\'eric Giroire, Fr\'ed\'eric Havet, 
Good edge-labelling of graphs, Discrete Applied Mathematics 160 (2012) 2502--2513.

\bibitem{ALS}
Stefan Arnborg, Jens Lagergren, and Detlef Seese, Easy problems for tree-decomposable graphs, J. Algorithms 12 (1991) 308--340.

\bibitem{APT}
Bengt Aspvall, Michael F. Plass, Robert E. Tarjan, A linear-time algorithm for testing the truth of certain quantified boolean formulas, Information Processing Letters 8 (1979) 121-123. Erratum: 14 (1982) 195.

\bibitem{Bodlaender} Hans~L. Bodlaender, A linear-time algorithm for finding tree-decompositions
of small treewidth, SIAM J. Computing 25 (1996) 1305--1317.

\bibitem{Bonsma09} Paul Bonsma, The complexity of the Matching-Cut problem for planar graphs and other graph classes, J. Graph Theory 62 (2009) 109--126.

\bibitem{BorowieckiJ08} Mieczys{\l}aw Borowiecki, Katarzyna Jesse-J\'ozefczyk, Matching cutsets in graphs of diameter 2, Theoretical Computer Science 407 (2008) 574--582.

\bibitem{Chvatal84} Va\v{s}ek Chv\'atal, Recognizing decomposable graphs, J. Graph Theory 8 (1984) 51--53.

\bibitem{DavPut} Martin Davis, Hilary Putnam, A computing procedure for quantification theory, J. ACM 7 (1960) 201--215.

\bibitem{Eppstein99}
David Eppstein, Subgraph isomorphism in planar graphs and related problems. J. Graph Algorithms
Appl. 3 (1999) 1--27.

\bibitem{Eppstein00}
David Eppstein, Diameter and treewidth in minor-closed families, Algorithmica 27 (2000) 275--291.

\bibitem{EvenIS76} Shimon Even, Alon Itai, Adi Shamir, On the complexity of timetable and multicommodity flow problems, 
SIAM J. Computing 5 (1976) 691--703.

\bibitem{FarleyP82} Arthur M. Farley, Andrzej Proskurowski, Networks immune to isolated line failures, Networks 12 (1982) 393--403.

\bibitem{Graham} Ron L. Graham, On primitive graphs and optimal vertex assignments.
Ann. N.Y. Acad. Sci. 175 (1970) 170--186.

\bibitem{KratschL16} Dieter Kratsch, Van Bang Le, Algorithms solving the Matching Cut problem, Theoretical Computer Science 609 (2016) 328--335. 

\bibitem{LeL16}
Hoang-Oanh Le, Van Bang Le, On the complexity of Matching Cut in graphs of fixed diameter, In Proceedings of the 27th International Symposium on Algorithms and Computation, {ISAAC} 2016, December 12-14, 2016, Sydney, Australia, pp.  50:1--50:12, doi: 10.4230/LIPIcs.ISAAC.2016.50.

\bibitem{LeR03} Van Bang Le, Bert Randerath, On stable cutsets in line graphs, Theoretical Computer Science 301 (2003) 463--475.

\bibitem{Moshi89} Augustine~M. Moshi, Matching cutsets in graphs, J. Graph Theory 13 (1989) 527--536.

\bibitem{PatrignaniP01} Maurizio Patrignani, Maurizio Pizzonia, The complexity of the matching-cut problem, WG~2001, Boltenhagen, Lecture Notes in Computer Science 2204, Springer, Berlin, 2001, pp. 284--295.



\end{thebibliography}
\end{document}